\documentclass[a4paper,fleqn]{cas-dc}
\usepackage[numbers]{natbib}
\usepackage{algorithm} 
\usepackage{algcompatible}
\usepackage{algpseudocode}

\usepackage{caption} 
\usepackage{physics}
\usepackage{amsthm}
\usepackage{mdframed}
\floatname{algorithm}{Protocol}
\usepackage{tabularx}
\usepackage[most]{tcolorbox}
\usepackage{float}
\usepackage{subcaption}
\usepackage{enumitem}
\usepackage{multirow}
\usepackage{capt-of}   
\usepackage{hyperref}
\usepackage{needspace}
\usepackage[dvipsnames]{xcolor}
\definecolor{deepgreen}{rgb}{0, 0.6, 0.2}
\definecolor{MyPurple}{RGB}{120, 30, 120}
\hypersetup{
    colorlinks=true,    
    linkcolor=blue,      
    citecolor=deepgreen, 
    urlcolor=MyPurple,   
    breaklinks=true     
}

\theoremstyle{definition}

\newtheorem{theorem}{Theorem}
\newtheorem{remark}{Remark}
\begin{document}
\let\WriteBookmarks\relax
\def\floatpagepagefraction{1}
\def\textpagefraction{.001}

\shorttitle{QTPSI}    
\shortauthors{Zixian Gong et~al.}  
\title [mode = title]{Quantum Multi-Party Threshold Private Set Intersection with Explicit Cardinality Testing}

\author[1]{Zixian Gong}[orcid={0009-0005-7059-5040}]
\cormark[1]
\author[1]{Kun Tian}
\author[1]{Yi Zhang}
\author[2]{Fengxia Liu}
\ead{ArtoriasGong@ruc.edu.cn}

\affiliation[1]{organization={School of Mathematics, Renmin University of China},
                city={Beijing},
                postcode={100872}, 
                country={P. R. China}}

\affiliation[2]{organization={Great Bay University},
                city={DongGuan},
                postcode={523808}, 
                country={P. R. China}}
\cortext[1]{Corresponding author}
\begin{abstract}
Threshold private set intersection (TPSI) allows parties to reveal their intersection only when its cardinality reaches a prescribed threshold. Existing quantum TPSI protocols typically rely on a third party (TP) to interpret the final results,  which deviates from the cardinality-testing paradigm of TPSI. In this paper, we propose a quantum multiparty TPSI protocol with explicit cardinality testing. 
Our protocol develops a rotation-based quantum construction in which single-photon sequences are sequentially processed through participant-side data rotations, TP--participant masking rotations, and correlated aggregate rotations. 
This design produces hidden-label measurement vectors: TP can complete the final measurement, but cannot interpret the semantic meaning of the outcomes. 
Based on these hidden measurements, we further realize the threshold decision through an oblivious linear evaluation (OLE)-based inner product procedure and a lightweight garbled circuit, revealing only
\(\mathbf 1[|\bigcap_i X_i|\ge \tau]\)
before conditional intersection reconstruction. 
We prove the correctness and security of the proposed protocol, and further validate its feasibility through quantum-circuit simulations implemented on the IBM \textsf{Qiskit} platform.
\end{abstract}
\begin{keywords}
Quantum Private Set Intersection \sep Cardinality Testing \sep Threshold Private Set Intersection  \sep Oblivious Linear Evaluation 
\end{keywords}
\maketitle
\section{Introduction}
\noindent Private set intersection (PSI) is a fundamental primitive in secure multiparty computation (SMC). It enables a set of parties, each holding a private set \(X_i\), to compute their common intersection \(\bigcap_i X_i\) without revealing any information about elements outside the intersection~\cite{Mea86,FNP04}. Owing to this privacy-preserving matching capability, PSI has become a key building block for a wide range of applications including contact tracing~\cite{abc20}, genomic data analysis~\cite{Shen18}, and ad conversion tracking~\cite{Mihaela17}.

Driven by the diversity of PSI applications, a variety of functional extensions have been developed to satisfy different privacy and utility requirements. Representative examples include PSI cardinality protocols, which reveal only the size of the intersection~\cite{Debnath15}; circuit PSI, which delivers the intersection in secret-shared form for subsequent secure computation~\cite{Huang12}; and fuzzy PSI, which supports similarity-based matching~\cite{Uzun21}. Among these variants, TPSI introduces a threshold-gated disclosure mechanism, namely,
\(
    \text{reveal } \bigcap_i X_i \text{only if}
    \left| \bigcap_i X_i \right| \ge \tau .
\)
This functionality is useful in practical applications such as privacy-preserving ridesharing~\cite{Hallgren17,Mohanty24} and matchmaking~\cite{ZhaoChow18}. Beyond such application-driven formulations, TPSI has also been studied as a distinct cryptographic primitive with dedicated communication-efficiency and multiparty constructions~\cite{GhoshSimkin19,BMRR21}.

The emergence of Shor's algorithm~\cite{Shor97} has shown that many cryptographic schemes relying on number-theoretic hardness assumptions are vulnerable in the presence of quantum computers. While some ongoing researches focuse on post-quantum cryptography (PQC) as an interim measure, Shi \emph{et al.}~\cite{QPSI_Shi16} proposed the first two-party quantum PSI protocol. \cite{QPSI_Zhang20} considered a three-party setting with a third party (TP) and realized quantum PSI-CA and PSU-CA based on GHZ states. \cite{QPSI_M23} further extended quantum PSI to multi-party setting (MP-QPSI) using single photons and unitary operations. \cite{QPSI_Huang24} achieved MP-QPSI via rotation operations. More recently, the threshold functionality has also been introduced into the quantum setting, leading to quantum TPSI based on rotation operations~\cite{Mohanty24,LI26} and quantum homomorphic encryption (QHE)~\cite{Wang26}.

Although several quantum TPSI protocols have been proposed \cite{Mohanty24, LI26, Wang26}, their threshold mechanisms do not entirely fully align with the ideal classical cardinality testing paradigm. 
In these constructions, a TP is typically introduced to assist the quantum procedure, perform the final measurement or result processing, recover the intersection cardinality, and then compare it with the public threshold \(\tau\). 
Consequently, TP simultaneously acts as a quantum-resource provider, a protocol participant, and an interpreter of the final threshold-related outcome. 
This creates both a concentration of authority and a functionality mismatch: recovering \(\left|\bigcap_{i=1}^{n}X_i\right|\) and then comparing it with \(\tau\) is strictly more revealing than directly realizing the testing predicate \(\mathbf 1\!\left[\left|\bigcap_{i=1}^{n}X_i\right|\ge \tau\right].\)  This gap motivates us to design a quantum TPSI protocol that decouples the measurement or result-processing role from the semantic interpretation role:
\begin{itemize}[leftmargin=12pt, itemsep=0pt, parsep=0pt, topsep=2pt]
    \item \textbf{Functionality gap and rotation-based hidden-label quantum design.}
    We identify that existing quantum TPSI schemes rely on TP-side cardinality recovery followed by threshold comparison. 
    To restore an explicit cardinality-testing paradigm, we develop a rotation-based quantum construction over single-photon sequences, combining participant-side data rotations, TP--participant masking rotations, and correlated aggregate rotations to produce hidden-label measurement outcomes that TP cannot directly interpret.

    \item \textbf{Cardinality testing for hidden quantum measurements.}
    We convert the hidden measurement outputs into a masked label-consistency test and realize the threshold decision through oblivious linear evaluation-based inner-product sharing and a lightweight garbled circuit. 
    The protocol outputs only \(\mathbf 1[|\bigcap_i X_i|\ge \tau]\) before conditional reconstruction, without revealing the exact cardinality.

    \item \textbf{Security, simulation, and comparison.}
    We prove correctness and security against outside eavesdroppers, TP, and colluding participants, and discuss anchor-based tamper detection. 
    We also validate the quantum phase via \textsf{Qiskit} simulation and compare our protocol with representative TPSI schemes.
\end{itemize}

\emph{Outline.} The remainder of this paper is organized as follows. 
Section~\ref{sectionMo} discusses the cardinality testing in quantum TPSI. 
Section~\ref{sectionCT} presents the realization of cardinality testing for hidden-label measurements. 
The proposed MP-QTPSI protocol is described in Section~\ref{sectionProtocol} along with correctness and security analysis in Section~\ref{sectionAnalysis}. Simulation and comparisons are given in Section~\ref{sectionSimulation}. 
Finally, Section~\ref{sectionConclusion} concludes the paper.
\section{Cardinality Testing for Quantum TPSI}\label{sectionMo}
\noindent\emph{\textbf{Cardinality testing in TPSI.}}
TPSI is intended to realize a conditional disclosure rule: the intersection is revealed only when its cardinality reaches a public threshold. 
Formally, the protocol should first determine
\[
\qquad\mathsf{flag}
=
\mathbf 1\!\left[
\left|\bigcap_{i=1}^{n}X_i\right|\ge \tau
\right],
\]
and release \(\bigcap_{i=1}^{n}X_i\) only if \(\mathsf{flag}=1\) \cite{GhoshSimkin19,BMRR21,GhoshSimkin23}. 
Thus, cardinality testing is a distinct functionality from cardinality disclosure and then comparison:
\[
\quad\mathbf 1\!\left[
\left|\bigcap_{i=1}^{n}X_i\right|\ge \tau
\right]
\overset{Functionality}{\neq}
\left|\bigcap_{i=1}^{n}X_i\right|\ge \tau.
\]the former reveals only a one-bit threshold predicate, whereas the latter exposes an additional statistic of the private sets.

\emph{\textbf{Limitation of TP-centered quantum TPSI.}}
In existing quantum TPSI protocols~\cite{Mohanty24,LI26,Wang26}, TP is typically responsible for quantum-state preparation or result processing, final measurement, and threshold-related interpretation. 
In particular, TP first derives the intersection cardinality, compares it with \(\tau\), and then decides whether to release the intersection related result.  Hence, TP's measurement role and semantic interpretation role are coupled together.  This makes the realized functionality closer to PSI-CA and comparison, rather than direct cardinality testing.  To recover the TPSI paradigm without revealing the exact cardinality to TP, the final quantum measurements must first be made semantically uninterpretable to TP.

\emph{\textbf{From Quantum Measurements to Hidden-Label Vectors.}}
We address this issue by introducing a participant-side hidden flip vector. The flip at position \(t\) is jointly realized through correlated aggregate rotations of all participants, so TP can perform the final measurement but cannot determine which deterministic label corresponds to an intersection-consistent outcome. 
The participant side keeps a synchronized reference-label vector, which records the correct post-flip interpretation of each position.

Since our construction is also based on rotations of angle $\pi/n$, positions held by only a subset of participants may yield probabilistic same/opposite outcomes, TP measures \(\ell\) repeated photon sequences and compresses them into two deterministic-label vectors \(z^{\mathsf S}\) and \(z^{\mathsf O}\). 
In addition, positive and negative anchor sets are mixed with the real domain before the position-hiding map, which further obscures TP's structural view of deterministic positions and later supports an anchor-consistency check. 
Consequently, after the quantum phase, TP holds only the hidden-label vectors \(z^{\mathsf S}\) and \(z^{\mathsf O}\), while the participant side retains the information needed to interpret them. In this way, the quantum phase no longer outputs a cardinality-interpretable result to TP, but only a semantically blinded measurement interface for the subsequent test.
The remaining task is therefore reduced to testing the threshold condition over hidden-label measurements without exposing the labels or the exact cardinality.

\section{Realizing Cardinality Testing for Hidden-Label Measurements}\label{sectionCT}
\noindent After the quantum phase discussed above, TP does not obtain an interpretable intersection vector.  Instead, TP only holds the hidden-label measurement vectors , whose labels cannot be mapped to intersection and non-intersection positions without the participant-side information. Therefore, the remaining task is to test the threshold condition over these hidden-label vector without revealing the measurement labels, the reference labels, and the exact cardinality. We realize this task by reducing it to a masked label-consistency test and instantiating the required inner products using oblivious linear evaluation (OLE).

\noindent\begin{minipage}{\linewidth}
\begin{tcolorbox}[
    width=\linewidth,
    colback=white,
    colframe=black,
    boxrule=0.8pt,
    arc=0pt,
    top=4pt,
    bottom=4pt,
    left=6pt,
    right=6pt,
    boxsep=0pt
]
\small
\noindent\underline{\(\mathcal F_{\mathsf{OLE}}^{p}\): Oblivious Linear Evaluation (OLE)}\vspace{0.8ex}

\noindent\textbf{Parameters:} A finite field \(\mathbb F_p\).\\
\textbf{Inputs:} Alice inputs \(x\in\mathbb F_p\). Bob inputs \(u,v\in\mathbb F_p\).\\
\textbf{Output:} Alice learns \(z=ux+v\in\mathbb F_p\). Bob learns \(\bot\).

\vspace{1.8ex}

\noindent\underline{\(\mathcal F_{\mathsf{VOLE}}^{p}\): Vector Oblivious Linear Evaluation (VOLE)}\vspace{0.8ex}

\noindent\textbf{Parameters:} A finite field \(\mathbb F_p\), and vector length \(L\).\\
\textbf{Inputs:} Alice inputs \(x\in\mathbb F_p\). Bob inputs \(\mathbf u,\mathbf v\in\mathbb F_p^{L}\).\\
\textbf{Output:} Alice learns \(\mathbf z=\mathbf{u}x+\mathbf v\in\mathbb F_p^{L}\). Bob learns \(\bot\).
\end{tcolorbox}
\captionof{figure}{Ideal functionalities for OLE and VOLE.}
\label{olevole-functionality}
\end{minipage}
\subsection{Oblivious Inner Product (OIP) from OLE}
\noindent\emph{\textbf{OLE and Vector OLE Functionalities.}} OLE can be viewed as the linear case of Oblivious Polynomial Evaluation (OPE) \cite{OPE99} that enables a receiver to compute a linear combination of the sender's inputs. Vector OLE (VOLE) generalizes this primitive to vectors, as illustrated in Fig.~\ref{olevole-functionality}. VOLE significantly reduces the overhead of batch OLE generation, making it a cornerstone for efficient cryptographic protocols such as zero-knowledge proofs and PSI \cite{BCGI18, Weng21, RS21}. They can be instantiated using different cryptographic techniques, such as OT-based extensions, function secret sharing (FSS) with distributed point functions \cite{BCGI18}. In particular, for our quantum setting, it can be instantiated from lattice-based assumptions such as RLWE \cite{Baum22}.

\emph{\textbf{Constructing OIP.}} For our cardinality testing subroutine, the useful interface is not OLE itself, but an oblivious inner product (OIP) functionality. 
Given two private vectors, OIP outputs additive shares of their inner product as shown in \ref{oip-functionality}, which is exactly the form needed later for Hamming distance shares. Concretely, let Alice hold \(\mathbf a=(a_0,\ldots,a_{L-1})\in\mathbb F_p^L\) and Bob hold \(\mathbf b=(b_0,\ldots,b_{L-1})\in\mathbb F_p^L\). 
For each coordinate \(t\), Alice inputs \(a_t\) to OLE, while Bob inputs \((b_t,r_t)\) for a random \(r_t\in\mathbb F_p\). 
Alice obtains \(a_tb_t+r_t\), and Bob keeps \(-r_t\) as his local share. 
After summing over all coordinates, Alice and Bob obtain additive shares
\(
s_A=\sum_{t=0}^{L-1}(a_tb_t+r_t),
s_B=-\sum_{t=0}^{L-1}r_t,
\)
which satisfy $s_A+s_B=\langle \mathbf a,\mathbf b\rangle \pmod p$.

\noindent\begin{minipage}{\linewidth}
\begin{tcolorbox}[
    width=\linewidth,
    colback=white,
    colframe=black,
    boxrule=0.8pt,
    arc=0pt,
    top=4pt,       
    bottom=4pt,    
    left=6pt,      
    right=6pt,     
    boxsep=0pt     
]
\small
\noindent\underline{\(\mathcal F_{\mathsf{OIP}}^{p}\): Oblivious Inner Product (OIP)}\vspace{0.8ex}

\noindent\textbf{Parameters:} A finite field \(\mathbb F_p\), and vector length \(L\).\\
\noindent\textbf{Inputs:} Alice inputs \(\mathbf a\in\mathbb F_p^L\). Bob inputs \(\mathbf b\in\mathbb F_p^L\).\\
\noindent\textbf{Output:} Alice receives \(s_A\in\mathbb F_p\), and Bob receives \(s_B\in\mathbb F_p\), such that \(s_A+s_B=\langle \mathbf a,\mathbf b\rangle \pmod p\).
\end{tcolorbox}
\captionof{figure}{Ideal functionality for oblivious inner product (OIP).}
\label{oip-functionality}
\end{minipage}
\medskip

Since the consistency statistics are linear combinations of inner products between TP-held label vectors and participant-held masks, OIP provides a natural interface for realizing the classical testing layer.
\subsection{Masked Label-Consistency Sharing and Threshold Decision}\label{sectionSharing}
\begin{table*}[t]
\centering
\small
\renewcommand{\arraystretch}{1.2} 
\caption{Summary of notation.}
\label{notation}
\begin{tabularx}{\linewidth}{l | X l | X}
\hline
\textbf{Symbol} & \textbf{Description} & \textbf{Symbol} & \textbf{Description} \\
\hline
$P_i$ & The $i$-th participant, where $i\in[n]$. & $\boldsymbol{\Theta}$ & Semantic-flip angle vector with $\Theta_t=b_t\pi$. \\

$M$ & Augmented domain size. & $\boldsymbol{\Delta}_i$ & Rotation-share vector of $P_i$. \\

$X_i$ & Private set held by $P_i$. & $T_i$ & Pairwise TP--$P_i$ rotation-mask vector. \\

$Y_i$ & Length-$M$ indicator vector encoded by $P_i$. & $\vartheta_0$ & TP-local initial rotation vector. \\

$\tau$ & Threshold for revealing the intersection. & $\vartheta_{i,t}$ & Data-dependent rotation angle of $P_i$ at position $t$. \\

$\ell$ & Repetition number. & $z^{\mathsf S},z^{\mathsf O}$ & Indicator vectors obtained by TP. \\

$K_{\mathsf P}$ & Participant-only master key. & $\rho$ & Participant-side reference label vector. \\

$k$ & Participant-side multiplicative hiding key. & $m_{\mathcal U},m_{\mathcal A}$ & Selector masks for real and anchor positions. \\

$\mathbf b$ & Hidden label-flip vector. & $d_{\mathcal U},d_{\mathcal A}$ & Inconsistency counts. \\
\hline
\end{tabularx}
\end{table*}

\noindent\emph{\textbf{Masked label-consistency counting.}}
We now instantiate the OIP interface for the deterministic measurement outcomes obtained in the quantum phase. 
Let \(z^{\mathsf S},z^{\mathsf O}\in\{0,1\}^M\) be TP's indicator vectors, where \(z_t^{\mathsf S}=1\) if the \(\ell\) measurement results at position \(t\) are all-same as initial basis, and \(z_t^{\mathsf O}=1\) if they are all-opposite compared to initial basis. 
For mixed outcomes, \(z_t^{\mathsf S}=z_t^{\mathsf O}=0\). 
The participant side holds a reference label vector \(\rho\in\{0,1\}^M\), where \(\rho_t=1\oplus b_t\) represents the deterministic label expected for an intersection position after the hidden flip \(b_t\). It also holds two selector vectors \(m_{\mathcal U},m_{\mathcal A}\in\{0,1\}^M\), selecting the hidden real-domain positions and the hidden anchor positions, respectively.

For each position \(t\), define the label-consistency indicator as \(\chi_t=(1-\rho_t)z_t^{\mathsf S}+\rho_t z_t^{\mathsf O}\). 
Thus, \(\chi_t=1\) exactly when the deterministic measurement label at position \(t\) agrees with the participant-side reference label; mixed outcomes contribute \(0\). 
For \(R\in\{\mathcal U,\mathcal A\}\), define the consistency count \(C_R=\sum_{t=0}^{M-1}m_{R,t}\chi_t\) and the corresponding inconsistency count \(d_R=|R|-C_R=|R|-\sum_{t=0}^{M-1}m_{R,t}\chi_t\). 
Equivalently,
\[
d_R
=
|R|
-
\left\langle z^{\mathsf S},m_R\odot(1-\rho)\right\rangle
-
\left\langle z^{\mathsf O},m_R\odot\rho\right\rangle .
\]
Here, \(d_{\mathcal U}\) counts real-domain positions that are not confirmed as intersection-consistent, while \(d_{\mathcal A}\) checks whether the anchor positions satisfy their prescribed labels.

\noindent
\begin{minipage}{\linewidth}
    \centering

    \includegraphics[width=0.9\linewidth]{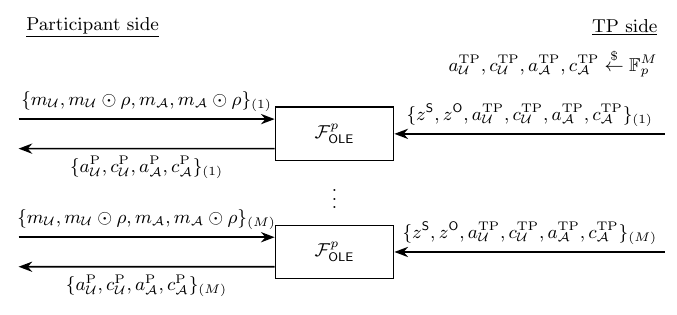}
    \captionof{figure}{Using OLE for additive shares.}
    \label{figOLE}

    \vspace{1.2em}

    \includegraphics[width=0.8\linewidth]{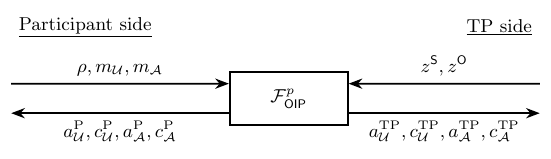}
    \captionof{figure}{OIP Constructed from OLE.}
    \label{figOIP}
\end{minipage}

\emph{\textbf{Consistency-share generation.}}
TP and the participant side invoke \hyperref[oip-functionality]{\(\mathcal F_{\mathsf{OIP}}^p\)} to obtain additive shares of
\(\langle z^{\mathsf S},m_{\mathcal U}\odot(1-\rho)\rangle\), 
\(\langle z^{\mathsf O},m_{\mathcal U}\odot\rho\rangle\), 
\(\langle z^{\mathsf S},m_{\mathcal A}\odot(1-\rho)\rangle\), and 
\(\langle z^{\mathsf O},m_{\mathcal A}\odot\rho\rangle\).
The OIP interface can be instantiated from coordinate-wise OLE calls, as illustrated in Fig.~\ref{figOLE} and ~\ref{figOIP}.

For \(R\in\{\mathcal U,\mathcal A\}\), write
\(
a_R^{\mathrm{TP}}+a_R^{\mathrm P}
=
\left\langle z^{\mathsf S},m_R\odot(1-\rho)\right\rangle,\)
\(c_R^{\mathrm{TP}}+c_R^{\mathrm P}
=
\left\langle z^{\mathsf O},m_R\odot\rho\right\rangle
\pmod p .
\)
Then TP computes \(\delta_R^{\mathrm{TP}}=-a_R^{\mathrm{TP}}-c_R^{\mathrm{TP}}\pmod p\), while the participant side computes \(\delta_R^{\mathrm P}=|R|-a_R^{\mathrm P}-c_R^{\mathrm P}\pmod p\). 
It follows that \(\delta_R^{\mathrm{TP}}+\delta_R^{\mathrm P}=d_R\pmod p\). 

\emph{Anchor check and threshold decision.}
After the sharing step, TP holds \((\delta_{\mathcal U}^{\mathrm{TP}},\delta_{\mathcal A}^{\mathrm{TP}})\), while the participant side holds \((\delta_{\mathcal U}^{\mathrm P},\delta_{\mathcal A}^{\mathrm P})\). 
The final Boolean predicate is evaluated by a lightweight garbled circuit \(C_{\mathrm{CT}}\), following the standard secure-computation approach for privately evaluating low-depth decision logic \cite{Yao86}, which reconstructs \(d_{\mathcal U}=(\delta_{\mathcal U}^{\mathrm{TP}}+\delta_{\mathcal U}^{\mathrm P})\bmod p\) and \(d_{\mathcal A}=(\delta_{\mathcal A}^{\mathrm{TP}}+\delta_{\mathcal A}^{\mathrm P})\bmod p\), and outputs only
\[
\mathsf{flag}=1
\iff
(d_{\mathcal A}=0)\wedge(d_{\mathcal U}\le |\mathcal U|-\tau).
\]
The condition \(d_{\mathcal A}=0\) verifies the anchor positions, while \(d_{\mathcal U}\le |\mathcal U|-\tau\) means that at least \(\tau\) real-domain positions are confirmed as intersection-consistent. 
The complete subprotocol is shown in Fig.~\ref{subprotocol}. 
By the privacy of garbled-circuit evaluation, the parties learn only the Boolean predicate value, not the explicit inconsistency counts \(d_{\mathcal U},d_{\mathcal A}\). 
Thus, the subprotocol realizes cardinality testing as \(\mathbf 1[|\bigcap_i X_i|\ge \tau]\), rather than exposing \(|\bigcap_i X_i|\) and then performing a comparison with \(\tau\).

\noindent\begin{minipage}{\linewidth}
\begin{tcolorbox}[
    width=\linewidth,
    colback=white,
    colframe=black,
    boxrule=0.8pt,
    arc=0pt,
    left=6pt,right=6pt,top=5pt,bottom=5pt,
    boxsep=0pt
]
\small
\noindent\textbf{Protocol \(\Pi_{\mathrm{CT}}^{\tau}\): Cardinality Testing from OIP and GC.}\vspace{1ex}

\noindent\underline{\textbf{Parameters:}} Augmented length \(M\), finite field \(\mathbb F_p\) with \(p>2M\), original domain size \(|\mathcal U|\), anchor-domain size \(|\mathcal A|\), and threshold \(\tau\).\vspace{0.8ex}

\noindent\underline{\textbf{Inputs:}} TP holds \(z^{\mathsf S},z^{\mathsf O}\in\{0,1\}^M\). The participant side holds \(\rho,m_{\mathcal U},m_{\mathcal A}\in\{0,1\}^M\), where \(\rho\) is the reference label vector, \(m_{\mathcal U}\) selects the hidden real-domain positions, and \(m_{\mathcal A}\) selects the hidden anchor positions.\vspace{0.8ex}

\noindent\underline{\textbf{OIP Share:}} TP and the participant side invoke \hyperref[oip-functionality]{\(\mathcal F_{\mathsf{OIP}}^{p}\)} four times to obtain additive shares of
\(\langle z^{\mathsf S},m_{\mathcal U}\odot(1-\rho)\rangle\),
\(\langle z^{\mathsf O},m_{\mathcal U}\odot\rho\rangle\),
\(\langle z^{\mathsf S},m_{\mathcal A}\odot(1-\rho)\rangle\), and
\(\langle z^{\mathsf O},m_{\mathcal A}\odot\rho\rangle\).
All the following share equalities are taken modulo $p$:
\vspace{-1.5ex}
\[
\begin{aligned}
a_{\mathcal U}^{\mathrm{TP}}+a_{\mathcal U}^{\mathrm P}
&= \langle z^{\mathsf S},m_{\mathcal U}\odot(1-\rho)\rangle,
&
c_{\mathcal U}^{\mathrm{TP}}+c_{\mathcal U}^{\mathrm P}
&= \langle z^{\mathsf O},m_{\mathcal U}\odot\rho\rangle,\\
a_{\mathcal A}^{\mathrm{TP}}+a_{\mathcal A}^{\mathrm P}
&= \langle z^{\mathsf S},m_{\mathcal A}\odot(1-\rho)\rangle,
&
c_{\mathcal A}^{\mathrm{TP}}+c_{\mathcal A}^{\mathrm P}
&= \langle z^{\mathsf O},m_{\mathcal A}\odot\rho\rangle.
\end{aligned}
\]

\vspace{-0.5ex}
\noindent\underline{\textbf{Consistency Share:}} TP locally computes
\(\delta_{\mathcal U}^{\mathrm{TP}}=-a_{\mathcal U}^{\mathrm{TP}}-c_{\mathcal U}^{\mathrm{TP}}\)
and
\(\delta_{\mathcal A}^{\mathrm{TP}}=-a_{\mathcal A}^{\mathrm{TP}}-c_{\mathcal A}^{\mathrm{TP}}\).
The participant side locally computes:
\vspace{-1.5ex}
\[
\begin{aligned}
\qquad\qquad\delta_{\mathcal U}^{\mathrm P}
&= |\mathcal U|-a_{\mathcal U}^{\mathrm P}-c_{\mathcal U}^{\mathrm P},\\
\delta_{\mathcal A}^{\mathrm P}
&= |\mathcal A|-a_{\mathcal A}^{\mathrm P}-c_{\mathcal A}^{\mathrm P}\pmod p.
\end{aligned}
\]

\vspace{-0.5ex}
\noindent\underline{\textbf{Threshold GC:}} TP and the participant side evaluate a garbled circuit \(C_{\mathrm{CT}}\) with inputs \((\delta_{\mathcal U}^{\mathrm{TP}},\delta_{\mathcal A}^{\mathrm{TP}})\) from TP and \((\delta_{\mathcal U}^{\mathrm P},\delta_{\mathcal A}^{\mathrm P})\) from the participant side. The circuit reconstructs \(d_{\mathcal U}=(\delta_{\mathcal U}^{\mathrm{TP}}+\delta_{\mathcal U}^{\mathrm P})\bmod p\) and \(d_{\mathcal A}=(\delta_{\mathcal A}^{\mathrm{TP}}+\delta_{\mathcal A}^{\mathrm P})\bmod p\), and outputs:
\vspace{-1.5ex}
\[
\mathsf{flag}=1 \iff (d_{\mathcal A}=0)\wedge(d_{\mathcal U}\le |\mathcal U|-\tau).
\]

\vspace{-0.5ex}
\noindent\underline{\textbf{Output:}} Both sides learn only \(\mathsf{flag}\). Neither side learns \(d_{\mathcal U}\), \(d_{\mathcal A}\), nor the other side's private vectors.
\end{tcolorbox}
\vspace{-1em}
\captionof{figure}{Masked Hidden-label Cardinality Testing Protocol.}
\label{subprotocol}
\end{minipage}

\begin{figure*}[t]  
    \centering
    \includegraphics[width=0.9\linewidth]{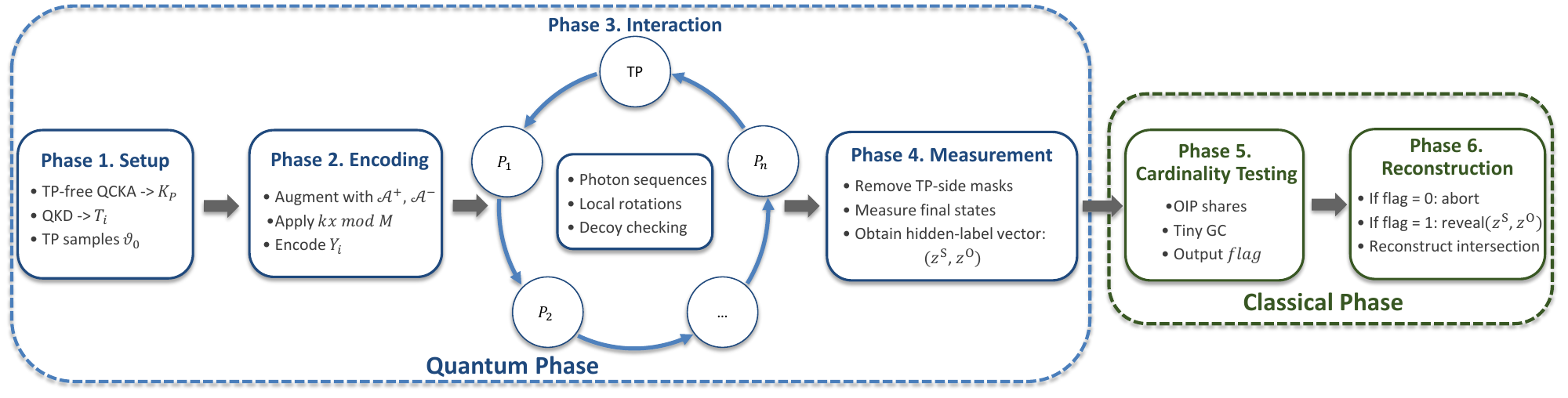}
    \caption{Flowchart.}
    \label{figFlowchart}
\end{figure*}

\section{Proposed Quantum TPSI Protocol}\label{sectionProtocol}
\noindent We now present the proposed  MP-QTPSI protocol. 
The protocol involves \(n\) participants \(P_1,\ldots,P_n\) and a semi-honest third party TP. 
Each participant \(P_i\) holds a private set \(X_i=\{x_i^1,\ldots,x_i^{\ell_i}\}\subseteq\mathcal U\). 
The goal is to reveal \(\bigcap_{i=1}^n X_i\) only when its cardinality is at least the threshold \(\tau\), and otherwise reveal no intersection element.

\indent The protocol uses two anchor sets \(\mathcal A^+\) and \(\mathcal A^-\) in addition to the original universe \(\mathcal U\). After the position-hiding map, these anchors are mixed with the real-domain positions and are later used to define the reference labels and to support the anchor-consistency check. Since TP does not know the hidden anchor locations and labels, this check can detect inconsistent or tampered measurement output with a certain probability.

\indent In our protocol, TP is used as a quantum-state preparer and measurement party, but it is not allowed to collude with any participant. Different from existing TP-centered quantum TPSI protocols, TP does not directly obtain interpretable intersection indices or the exact intersection cardinality. Instead, the quantum interaction produces a hidden-label measurement vector, and the threshold decision is made through the masked cardinality-testing subprotocol described in Section~\ref{sectionCT}. The overall workflow is illustrated in Fig.~\ref{figFlowchart}.

\begin{algorithm*}[!tp]
\small
\caption{Quantum Multi-party Threshold PSI with Cardinality Testing}
\label{tpsi}
\begin{algorithmic}[1]
\Require Private sets $X_i\subseteq\mathcal U$, threshold $\tau$, anchor sets $\mathcal A^+,\mathcal A^-$, repetition number $\ell$.
\Ensure Reveal $\bigcap_{i=1}^n X_i$ iff its cardinality is at least $\tau$.

\Statex \hspace{-\algorithmicindent} \underline{\textbf{Phase 1: Setup.}}
\State Let $\widetilde{\mathcal U}=\mathcal U\cup\mathcal A^+\cup\mathcal A^-$ and $M=|\widetilde{\mathcal U}|$.
\State Participants run TP-free QCKA to obtain $K_{\mathsf P}$ and derive $k\in\mathbb Z_M^*$, $\mathbf b=(b_0,\ldots,b_{M-1})\in\{0,1\}^M$, $\boldsymbol{\Theta}=(\Theta_0,\ldots,\Theta_{M-1})$ with $\Theta_t=b_t\pi$, and rotation-share vectors $\boldsymbol{\Delta}_i=(\Delta_{i,0},\ldots,\Delta_{i,M-1})$ satisfying $\sum_{i=1}^n\Delta_{i,t}\equiv\Theta_t\pmod{2\pi}$ for all $t$.
\State For each $i\in[n]$, TP and $P_i$ run QKD to derive $T_i=(T_{i,0},\ldots,T_{i,M-1})$; TP samples $\vartheta_0=(\vartheta_{0,0},\ldots,\vartheta_{0,M-1})$.

\Statex \hspace{-\algorithmicindent} \underline{\textbf{Phase 2: Encoding.}}
\State Compute the hidden domains $k\mathcal U$, $k\mathcal A^+$, and $k\mathcal A^-\pmod M$.
\State Define $m_{\mathcal U}=\mathbf 1_{k\mathcal U}$, $m_{\mathcal A}=\mathbf 1_{k\mathcal A^+\cup k\mathcal A^-}$, and $\rho_t=1\oplus b_t$ for $t\in k\mathcal U\cup k\mathcal A^+$, $\rho_t=b_t$ for $t\in k\mathcal A^-$.
\For{$i=1$ to $n$}
    \State $P_i$ sets $\widetilde X_i=X_i\cup\mathcal A^+$ and encodes $Y_i=\mathbf 1_{\{kx\bmod M:x\in\widetilde X_i\}}\in\{0,1\}^M$.
\EndFor

\Statex \hspace{-\algorithmicindent} \underline{\textbf{Phase 3: Interaction and Measurement.}}
\State TP prepares $\ell$ length-$M$ photon sequences, applies $R_y(\vartheta_{0,t})$ to each position $t$, inserts decoys, and sends the sequence to $P_1$.
\For{$i=1$ to $n$}
    \State $P_i$ checks decoys and applies $R_y(\vartheta_{i,t}+T_{i,t}+\Delta_{i,t})$ at each position $t$, where $\vartheta_{i,t}=\pi/n$ if $Y_{i,t}=1$ and $\vartheta_{i,t}=0$ otherwise.
    \State $P_i$ inserts fresh decoys and forwards the sequence to $P_{i+1}$.
\EndFor
\State TP checks decoys, applies $R_y(-\vartheta_{0,t}-\sum_{i=1}^n T_{i,t})$ at each position $t$, and measures to obtain \(z^{\mathsf S},z^{\mathsf O}\in\{0,1\}^M\).

\Statex \hspace{-\algorithmicindent} \underline{\textbf{Phase 4: Cardinality Testing.}}
\State TP and the participants invoke \hyperref[subprotocol]{\(\Pi_{\mathrm{CT}}^{\tau}\)} on $z^{\mathsf S},z^{\mathsf O}\in\{0,1\}^M$ and $(\rho,m_{\mathcal U},m_{\mathcal A},\tau)$.
\State \hyperref[subprotocol]{\(\Pi_{\mathrm{CT}}^{\tau}\)} outputs $\mathsf{flag}=1$ iff $d_{\mathcal A}=0$ and $d_{\mathcal U}\le |\mathcal U|-\tau$, where \(d_R=|R|-\sum_{t=0}^{M-1}m_{R,t}((1-\rho_t)z_t^{\mathsf S}+\rho_t z_t^{\mathsf O})\).

\Statex \hspace{-\algorithmicindent} \underline{\textbf{Phase 5: Intersection Reconstruction.}}
\If{$\mathsf{flag}=1$}
    \State TP broadcasts $z^{\mathsf S},z^{\mathsf O}\in\{0,1\}^M$, and each participant outputs $\mathcal I =
\left\{
k^{-1}t\bmod M:
t\in k\mathcal U,\ 
(1-\rho_t)z_t^{\mathsf S}+\rho_t z_t^{\mathsf O}=1
\right\}$.
\Else
    \State Abort.
\EndIf
\end{algorithmic}
\end{algorithm*}

\emph{\textbf{Setup.}} Let \(\widetilde{\mathcal U}=\mathcal U\cup\mathcal A^+\cup\mathcal A^-\) be the augmented universe with \(\{0,\ldots,M-1\}\), where \(\mathcal U=\{0,\ldots,q-1\}\) is the original real domain and the remaining positions are occupied by \(\mathcal A^+\) and \(\mathcal A^-\) and let \(M=|\widetilde{\mathcal U}|\). 
The participants first execute a TP-free QCKA protocol~\cite{Fu15,QCKA21} to establish a participant-only master key \(K_{\mathsf P}\). 
From \(K_{\mathsf P}\), they derive the multiplicative hiding key \(k\in\mathbb Z_M^*\), the hidden label-flip vector \(\mathbf b=(b_0,\ldots,b_{M-1})\in\{0,1\}^M\), the semantic-flip angle vector \(\boldsymbol{\Theta}=(\Theta_0,\ldots,\Theta_{M-1})\) with \(\Theta_t=b_t\pi\), and the rotation-share vectors \(\boldsymbol{\Delta}_i=(\Delta_{i,0},\ldots,\Delta_{i,M-1})\) satisfying \(\sum_{i=1}^n\Delta_{i,t}\equiv \Theta_t\pmod{2\pi}\) for every position \(t\). 
In addition, for each participant \(P_i\), TP and \(P_i\) run a QKD protocol~\cite{BB84} to establish a pairwise key, from which they derive the rotation-mask vector \(T_i=(T_{i,0},\ldots,T_{i,M-1})\). 
Finally, TP locally samples an initial blinding vector \(\vartheta_0=(\vartheta_{0,0},\ldots,\vartheta_{0,M-1})\).

\emph{\textbf{Encoding.}} Each participant augments its input as \(\widetilde X_i=X_i\cup\mathcal A^+\), while \(\mathcal A^-\) is inserted into none of the participants' sets. Using the hiding key \(k\), all positions are permuted by \(x\mapsto kx\bmod M\), giving the hidden domains \(k\mathcal U\), \(k\mathcal A^+\), and \(k\mathcal A^- \,mod\,M\).

The selector vectors \(m_{\mathcal U},m_{\mathcal A}\in\{0,1\}^M\) are defined according to this hidden partition where \(m_{\mathcal U}+m_{\mathcal A}=\mathbf 1\). Specifically, \(m_{\mathcal U,t}=1\) iff \(t\in k\mathcal U\), and \(m_{\mathcal A,t}=1\) iff \(t\in k\mathcal A^+\cup k\mathcal A^-\). 
Thus \(m_{\mathcal U}\) marks the positions originating from the real domain, while \(m_{\mathcal A}\) marks the anchor positions after the \(kx\bmod M\) permutation. 

The participants also define the reference label vector \(\rho\in\{0,1\}^M\). 
For positions expected to behave as matching positions, namely \(t\in k\mathcal U\cup k\mathcal A^+\), they set \(\rho_t=1\oplus b_t\); for negative-anchor positions \(t\in k\mathcal A^-\), they set \(\rho_t=b_t\), where \(b_t\) is the hidden label-flip bit used to blind the semantic meaning of the final measurement label at position \(t\). 
Finally, each participant encodes its hidden augmented set as indicator vector
\(
Y_i=\mathbf 1_{\{kx\bmod M:\,x\in\widetilde X_i\}}\in\{0,1\}^M,
\)
which is used in the rotation phase.

\emph{\textbf{Interaction and measurement.}} 
To reduce measurement uncertainty, TP prepares $\ell$ length-$M$ photon sequences, each following an identical state configuration where every qubit is randomly chosen from $\{|0\rangle, |1\rangle, |+\rangle, |-\rangle\}$.
Then, TP applies the initial blinding rotation \(R_y(\vartheta_{0})\), inserts decoy photons, and sends the protected sequences to \(P_1\). For \(i=1,\ldots,n\), participant \(P_i\) first checks the decoy photons with the previous sender and removes them. 
Then \(P_i\) applies the position-wise rotation \(R_y(\vartheta_{i,t}+T_{i,t}+\Delta_{i,t})\) to each photon at position \(t\), where \(\vartheta_{i,t}=\pi/n\) if \(Y_{i,t}=1\), and \(\vartheta_{i,t}=0\) otherwise. Afterwards, \(P_i\) inserts fresh decoy photons and forwards the protected sequences to \(P_{i+1}\) until \(P_n\) returns the final sequence to TP.

After the final decoy check, TP removes the rotations known to itself by applying \(R_y(-\vartheta_{0,t}-\sum_{i=1}^nT_{i,t})\) at each position \(t\). The remaining effective rotation is therefore \(\sum_{i=1}^n\vartheta_{i,t}+\sum_{i=1}^n\Delta_{i,t}=\sum_{i=1}^n\vartheta_{i,t}+\Theta_t=\sum_{i=1}^n\vartheta_{i,t}+b_t \pi\). TP measures the \(\ell\) sequences and derives two deterministic-outcome indicator vectors \(z^{\mathsf S},z^{\mathsf O}\in\{0,1\}^M\): \(z_t^{\mathsf S}=1\) if the \(\ell\) measurement outcomes at position \(t\) are all-same, \(z_t^{\mathsf O}=1\) if they are all-opposite, and \(z_t^{\mathsf S}=z_t^{\mathsf O}=0\) otherwise.

\emph{\textbf{Cardinality testing.}} Participants side and TP invoke the cardinality testing subprotocol \hyperref[subprotocol]{\(\Pi_{\mathrm{CT}}^{\tau}\)} described in Section~\ref{sectionCT}. The subprotocol first checks the anchor consistency through \(d_{\mathcal A}=0\), which verifies whether the returned measurement vector is consistent with the prescribed anchor positions. It then tests the real-domain threshold condition \(d_{\mathcal U}\le |\mathcal U|-\tau\), and outputs only the Boolean value \(\mathsf{flag}\). 

\emph{\textbf{Intersection Reconstruction.}} If \(\mathsf{flag}=1\), then both the anchor-consistency check and the cardinality testing have passed. TP broadcasts the complete label vectors \(z^{\mathsf S}\) and \(z^{\mathsf O}\) to all participants. 
Using the reference labels \(\rho\), the hidden real-domain selector \(m_{\mathcal U}\), and the inverse \(k^{-1}\bmod M\), each participant locally recovers the intersection as
\(
\mathcal I
=
\left\{
k^{-1}t\bmod M:
t\in k\mathcal U,\ 
(1-\rho_t)z_t^{\mathsf S}+\rho_t z_t^{\mathsf O}=1
\right\}.
\)
\section{Correctness and Security Analysis}\label{sectionAnalysis}
\subsection{Correctness Analysis}
\noindent In this subsection, we establish the correctness of the proposed protocol from the following three aspects:
\begin{enumerate}[label=(\arabic*),leftmargin=12pt, itemsep=0pt, parsep=0pt, topsep=2pt]
    \item TP obtains a hidden-label measurement vector that correctly encodes the intersection pattern;
    \item The \hyperref[subprotocol]{\(\Pi_{\mathrm{CT}}^{\tau}\)} outputs the correct threshold decision;
    \item Once the threshold condition is satisfied, the participants can correctly reconstruct the real intersection.
\end{enumerate}

\begin{theorem}[Correctness of intersection pattern]\label{theoremCo1}
The vectors \(z^{\mathsf S}\) and \(z^{\mathsf O}\) measured by TP correctly encode the intersection pattern over the augmented domain under the participant-side label flip vector \(\mathbf b\).
\end{theorem}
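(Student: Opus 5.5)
The plan is to compute, position by position, the net rotation acting on each photon and then read off the measurement statistics in the preparation basis. Since every operation in Phase~3 is a rotation \(R_y(\cdot)\) about the same axis, the rotations commute and their angles add: \(R_y(\alpha)R_y(\beta)=R_y(\alpha+\beta)\). The intermediate decoy insertion and removal do not touch the data photons. First, I will compose TP's initial \(R_y(\vartheta_{0,t})\), the participants' \(R_y(\vartheta_{i,t}+T_{i,t}+\Delta_{i,t})\), and TP's final \(R_y(-\vartheta_{0,t}-\sum_i T_{i,t})\). The net operator at position \(t\) is then \(R_y(\phi_t)\) with
\[
\phi_t=\sum_{i=1}^n\vartheta_{i,t}+\sum_{i=1}^n\Delta_{i,t}=\frac{\pi}{n}\,c_t+b_t\pi \pmod{2\pi},
\]
where \(c_t=\sum_i Y_{i,t}\) counts the participants whose hidden augmented set contains the preimage \(k^{-1}t\). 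This step uses the Setup identity \(\sum_i\Delta_{i,t}\equiv\Theta_t=b_t\pi\).

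Second, I will analyze the three possible values of \(\phi_t\). The initial states are \(\ket{0},\ket{1},\ket{+},\ket{-}\), each measured in its own preparation basis.
\begin{itemize}[leftmargin=12pt, itemsep=0pt, parsep=0pt, topsep=2pt]
\item If \(c_t=0\), then \(\phi_t=b_t\pi\). \(R_y(0)\) is the identity, so the outcome is deterministically the same. \(R_y(\pi)\) sends \(\ket0\mapsto\ket1\) and \(\ket+\mapsto\pm\ket-\), up to a global phase, so the outcome is deterministically opposite.
\item If \(c_t=n\), then \(\phi_t=\pi+b_t\pi\). The label is deterministic but flipped relative to the previous case.
\item If \(0<c_t<n\), then \(\phi_t\notin\{0,\pi\}\pmod{2\pi}\). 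The probability of a flipped outcome is \(\sin^2(\phi_t/2)\in(0,1)\) in both bases, so the result is probabilistic.
\end{itemize}

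Third, I will translate these cases into the indicator vectors. Over \(\ell\) independent repetitions, a deterministic position yields \(z^{\mathsf S}_t=1\) when \(b_t\oplus\mathbf 1[c_t=n]=0\), and \(z^{\mathsf O}_t=1\) when \(b_t\oplus\mathbf 1[c_t=n]=1\). A probabilistic position is all-same or all-opposite only with probability \(p_t^\ell+(1-p_t)^\ell\), which is at most \(\max_{0<c<n}\big(\cos^{2\ell}(c\pi/2n)+\sin^{2\ell}(c\pi/2n)\big)\); this bound vanishes as \(\ell\) grows. Intersection positions \(t\in k(\bigcap_iX_i)\) and positive anchors \(t\in k\mathcal A^+\) have \(c_t=n\), so they give the label \(1\oplus b_t=\rho_t\). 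Negative anchors have \(c_t=0\), so they give the label \(b_t=\rho_t\). Real positions held by nobody also give the deterministic label \(b_t\neq\rho_t\), so they are correctly classified as non-intersection. Hence \(\chi_t=1\) if and only if \(t\in k(\bigcap_iX_i\cup\mathcal A^+\cup\mathcal A^-)\), up to the repetition error. This is exactly the claimed encoding of the intersection pattern under the flip vector \(\mathbf b\).

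The main obstacle is the mixed case. Correctness there is only probabilistic, so I would state it as holding with probability at least \(1-M\max_{0<c<n}\big(\cos^{2\ell}\tfrac{c\pi}{2n}+\sin^{2\ell}\tfrac{c\pi}{2n}\big)\), obtained by a union bound over positions. The worst case is \(c=1\) or \(c=n-1\), where the flip probability \(\sin^2(\pi/2n)\) is smallest, so \(\ell\) must scale roughly like \(n^2\log M\). I would also check the basis-dependent global phases, where \(R_y(\pi)\ket+\) equals \(-\ket-\), to confirm that the outcome statistics are genuinely identical for all four preparation states.
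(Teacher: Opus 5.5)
Your proposal is correct and follows essentially the same route as the paper. You compose the commuting $R_y$ rotations to get the residual angle $c_t\pi/n+b_t\pi$ at each position, then split into the cases $c_t=0$, $0<c_t<n$, and $c_t=n$. Two of your steps go further than the paper: you check that the flip probability $\sin^2\!\left(\phi_t/2\right)$ is the same for all four preparation states, and you union-bound the chance that a partially held position looks deterministic over the $\ell$ repetitions. The paper only asserts that such positions are ``recorded as mixed,'' so your version is, if anything, the more careful one.
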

\begin{proof}
For each position \(t\), after all participants complete their rotations and TP removes the rotations known to itself, the remaining effective rotation is
\[
\vartheta_{0,t}
+\sum_{i=1}^{n}(\vartheta_{i,t}+T_{i,t}+\Delta_{i,t})
-\vartheta_{0,t}
-\sum_{i=1}^{n}T_{i,t}
=
\sum_{i=1}^{n}\vartheta_{i,t}+\Theta_t .
\]
Let \(r_t=\sum_{i=1}^{n}Y_{i,t}\). Since \(\vartheta_{i,t}=\pi/n\) if \(Y_{i,t}=1\) and \(\vartheta_{i,t}=0\) otherwise, the residual rotation at position \(t\) is
\(
\frac{r_t\pi}{n}+\Theta_t
=
\frac{r_t\pi}{n}+b_t\pi .
\)
We consider the following three cases.

\emph{Case 1: \(r_t=0\).}
No participant holds the encoded element at position \(t\), so the residual rotation is \(b_t\pi\). 
Hence the final measurement relation is all-same when \(b_t=0\), and all-opposite when \(b_t=1\).

\emph{Case 2: \(1\le r_t\le n-1\).}
Only a proper subset of participants holds the encoded element at position \(t\), and the residual rotation is
\(
\frac{r_t\pi}{n}+b_t\pi,
\)
which is neither \(0\) nor \(\pi\) modulo \(2\pi\). 
Therefore, the resulting sequence are in superposition states. 
Over the \(\ell\) repeated sequences, such a position is recorded as "mixed".

\emph{Case 3: \(r_t=n\).}
All participants hold the encoded element at position \(t\), so the residual rotation is
\(
\pi+b_t\pi
\equiv
(1\oplus b_t)\pi
\pmod{2\pi}.
\)
Hence the final measurement relation is all-opposite when \(b_t=0\), and all-same when \(b_t=1\).

Consequently, \(z_t^{\mathsf S}\) and \(z_t^{\mathsf O}\) encode the deterministic non-intersection and intersection labels after the participant-side flip \(b_t\), while partially matched positions are classified as "mixed". 
Thus, the vectors \(z^{\mathsf S}\) and \(z^{\mathsf O}\) correctly represent the intersection pattern over the augmented domain under the label-flip vector \(\mathbf b\).
\end{proof}
\begin{theorem}[Correctness of cardinality testing]\label{theoremCo2}
The subprotocol \hyperref[subprotocol]{\(\Pi_{\mathrm{CT}}^{\tau}\)} outputs the correct value of \(\mathsf{flag}\).
\end{theorem}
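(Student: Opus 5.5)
The plan is to separate the argument into an algebraic part (the shares reconstruct the true counts) and a semantic part (the counts mean what the threshold test needs). \emph{Algebraic part.} By the definition of \(\mathcal F_{\mathsf{OIP}}^p\), the four invocations give
\(a_R^{\mathrm{TP}}+a_R^{\mathrm P}\equiv\langle z^{\mathsf S},m_R\odot(1-\rho)\rangle\) and \(c_R^{\mathrm{TP}}+c_R^{\mathrm P}\equiv\langle z^{\mathsf O},m_R\odot\rho\rangle \pmod p\) for \(R\in\{\mathcal U,\mathcal A\}\). Summing the local computations of the Consistency Share step then yields
\(\delta_R^{\mathrm{TP}}+\delta_R^{\mathrm P}\equiv |R|-\sum_t m_{R,t}\chi_t \equiv d_R \pmod p\).
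The next step is to lift this congruence to an integer equality. Since \(z^{\mathsf S}_t z^{\mathsf O}_t=0\), each \(\chi_t\in\{0,1\}\), so \(0\le C_R\le |R|\) and \(0\le d_R\le |R|\le M<p\). Hence the value \((\delta_R^{\mathrm{TP}}+\delta_R^{\mathrm P})\bmod p\) reconstructed inside \(C_{\mathrm{CT}}\) equals the integer \(d_R\), with no wraparound; this is where the parameter condition \(p>2M\) is used. By correctness of garbled-circuit evaluation, the output bit is therefore exactly \(\mathbf 1[(d_{\mathcal A}=0)\wedge(d_{\mathcal U}\le|\mathcal U|-\tau)]\).

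\emph{Semantic part.} Here I will invoke Theorem~\ref{theoremCo1}, assuming honest execution and that \(\ell\) is large enough that Case~2 positions are indeed recorded as mixed. First take \(t\in k\mathcal U\). By the three cases of Theorem~\ref{theoremCo1}, \(r_t=n\) gives the label ``opposite'' when \(b_t=0\) and ``same'' when \(b_t=1\). Since \(\rho_t=1\oplus b_t\), this is precisely \(\chi_t=1\). If \(r_t=0\), the label is the complementary one, so \(\chi_t=0\). If \(1\le r_t\le n-1\), the outcome is mixed, so \(\chi_t=0\). Therefore \(C_{\mathcal U}=|\{t\in k\mathcal U: r_t=n\}|=|\bigcap_i X_i|\), because \(x\mapsto kx\bmod M\) is a bijection (\(k\in\mathbb Z_M^*\)) and \(Y_{i,kx}=1\iff x\in \widetilde X_i\).

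Next take the anchors. Each \(a\in\mathcal A^+\) lies in every \(\widetilde X_i\), so \(r_{ka}=n\) and, by the same computation, \(\chi_{ka}=1\). Each \(a\in\mathcal A^-\) lies in no set, so \(r_{ka}=0\) and the label is ``same'' iff \(b_t=0\). This matches \(\rho_t=b_t\), giving \(\chi_{ka}=1\). Hence \(d_{\mathcal A}=0\) under honest execution.

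Combining the two parts, \(d_{\mathcal U}=|\mathcal U|-|\bigcap_iX_i|\), so \(d_{\mathcal U}\le|\mathcal U|-\tau\) holds iff \(|\bigcap_iX_i|\ge\tau\), and thus \(\mathsf{flag}=\mathbf 1[|\bigcap_iX_i|\ge\tau]\).

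\emph{Main obstacle.} The delicate step is the semantic one: pinning down the correspondence between the ``same/opposite'' labels and \(\rho_t\) consistently for all four position types. The point is to check that \(\chi_t\) matches exactly when \(r_t=n\) on real positions while the negative anchors are still counted as consistent. A secondary caveat to state explicitly is that Case~2 positions are classified as mixed only with high probability in \(\ell\), so the correctness claim holds with probability at least \(1-\mathrm{negl}(\ell)\).
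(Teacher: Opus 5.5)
Your proposal is correct. Its algebraic half is exactly the paper's entire proof. The paper notes that the four $\mathcal F_{\mathsf{OIP}}^p$ calls and the local Consistency Share step give $\delta_R^{\mathrm{TP}}+\delta_R^{\mathrm P}\equiv d_R \pmod p$, and that $C_{\mathrm{CT}}$ evaluates $(d_{\mathcal A}=0)\wedge(d_{\mathcal U}\le|\mathcal U|-\tau)$. It then simply asserts that these conditions ``correctly'' perform the anchor check and the threshold decision.

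You go beyond this in two places.

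\emph{Integer lifting.} You lift the congruence to an integer identity using $0\le d_R\le |R|\le M<p$. The paper leaves this implicit, even though the integer comparison inside the circuit depends on it. Two minor points here: only $p>M$ is actually needed, and $\chi_t\in\{0,1\}$ follows from $\rho_t\in\{0,1\}$ alone, so you do not need $z_t^{\mathsf S}z_t^{\mathsf O}=0$.

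\emph{Semantic link.} You actually prove that $C_{\mathcal U}=|\bigcap_i X_i|$ and $d_{\mathcal A}=0$ in an honest run, by pushing all four position types through the cases of Theorem~\ref{theoremCo1}. The paper establishes the analogous identity $\{t\in k\mathcal U:\chi_t=1\}=k(\bigcap_i X_i)$ only later, inside the proof of Theorem~\ref{theoremCo3}, and never argues the negative-anchor case in general.

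The paper's version buys brevity by reading ``correct'' as ``the circuit computes its specified predicate''. Yours shows that this predicate really equals $\mathbf 1[|\bigcap_i X_i|\ge\tau]$.

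Your probabilistic caveat is also a genuine refinement the paper omits. A Case~2 position is misread as deterministic with probability at most $\cos^{2(\ell-1)}\!\bigl(\pi/(2n)\bigr)$. For the union bound over $k\mathcal U$ to be small, $\ell$ must therefore grow roughly like $n^2\log M$, not merely be ``large''.
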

\begin{proof}
By the construction in Section~\ref{sectionSharing} and the steps of 
\hyperref[subprotocol]{\(\Pi_{\mathrm{CT}}^{\tau}\)}, TP and the participant side invoke 
\(\mathcal F_{\mathsf{OIP}}^{p}\) to obtain additive shares of the inner products required for the consistency counts over the real-domain and anchor positions. 
They then locally derive additive shares of \(d_{\mathcal U}\) and \(d_{\mathcal A}\), satisfying
\(d_R=\delta_R^{\mathrm{TP}}+\delta_R^{\mathrm P}\pmod p\) for \(R\in\{\mathcal U,\mathcal A\}\).

The tiny garbled circuit \(C_{\mathrm{CT}}\) reconstructs these values only inside the circuit and evaluates
\(
\mathsf{flag}=1
\iff
(d_{\mathcal A}=0)\wedge(d_{\mathcal U}\le |\mathcal U|-\tau).
\)
Hence, \(d_{\mathcal A}=0\) correctly performs the anchor check, while
\(d_{\mathcal U}\le |\mathcal U|-\tau\) correctly judge the threshold decision. 
Therefore, \hyperref[subprotocol]{\(\Pi_{\mathrm{CT}}^{\tau}\)} outputs the correct value of \(\mathsf{flag}\).
\end{proof}
\begin{theorem}[Correctness of reconstruction]\label{theoremCo3}
If \(\mathsf{flag}=1\), then the participants can recover the \(\bigcap_{i=1}^n X_i\) exactly.
\end{theorem}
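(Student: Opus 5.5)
The plan is to combine Theorem~\ref{theoremCo1}, which fixes the deterministic label at every position, with the definition of the reconstruction rule in Phase~5. Neither the OIP shares nor the garbled circuit enter here. We only need that, when \(\mathsf{flag}=1\), TP broadcasts the honestly measured vectors \(z^{\mathsf S},z^{\mathsf O}\). The participants already hold \(k\), \(\mathbf b\), \(\rho\) and \(m_{\mathcal U}\), all derived from \(K_{\mathsf P}\).

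First, we show that at every real-domain position the consistency indicator is exactly the intersection indicator. Fix \(x\in\mathcal U\) and let \(t=kx\bmod M\). Since \(k\in\mathbb Z_M^*\), the map \(x\mapsto kx\bmod M\) is a bijection of \(\widetilde{\mathcal U}\). Hence \(t\in k\mathcal U\), and \(t\) corresponds to no other element. Because \(\mathcal A^+\cap\mathcal U=\emptyset\), we have \(Y_{i,t}=\mathbf 1[x\in X_i]\), so \(r_t=|\{i:x\in X_i\}|\). For such \(t\) the reference label is \(\rho_t=1\oplus b_t\), so \(\chi_t=(1-\rho_t)z^{\mathsf S}_t+\rho_t z^{\mathsf O}_t\) equals \(z^{\mathsf S}_t\) if \(b_t=1\) and \(z^{\mathsf O}_t\) if \(b_t=0\). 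We then use the three cases of Theorem~\ref{theoremCo1}.
\begin{enumerate}[label=(\roman*),leftmargin=14pt, itemsep=0pt, parsep=0pt, topsep=2pt]
\item If \(r_t=n\), the outcome is all-opposite when \(b_t=0\) and all-same when \(b_t=1\), so \(\chi_t=1\).
\item If \(r_t=0\), the outcome is the complementary deterministic label. Since \(z^{\mathsf S}_t\) and \(z^{\mathsf O}_t\) cannot both be \(1\), this gives \(\chi_t=0\).
\item If \(1\le r_t\le n-1\), the position is recorded as mixed, so \(z^{\mathsf S}_t=z^{\mathsf O}_t=0\) and \(\chi_t=0\).
\end{enumerate}
Hence \(\chi_t=1\iff x\in\bigcap_i X_i\).

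Second, we conclude. By construction, \(\mathcal I\) ranges only over \(t\in k\mathcal U\), which is exactly the support of \(m_{\mathcal U}\). Applying \(k^{-1}\) inverts the bijection, so
\[
\mathcal I=\{x\in\mathcal U:\chi_{kx}=1\}=\bigcap_{i=1}^n X_i .
\]
Anchor positions are excluded by the selector, so they cannot contaminate \(\mathcal I\). The condition \(\mathsf{flag}=1\) serves only to trigger the broadcast. By Theorem~\ref{theoremCo2} it also confirms \(d_{\mathcal A}=0\) and \(|\mathcal I|\ge\tau\), which is consistent but not needed for exactness.

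The main obstacle is the mixed case (iii). A residual rotation of \(r_t\pi/n+b_t\pi\) with \(1\le r_t\le n-1\) yields non-deterministic outcomes, but with finite \(\ell\) all \(\ell\) outcomes might coincide by chance. That position would then be mislabeled as deterministic. If it were mislabeled with the intersection label, it would be falsely included in \(\mathcal I\).

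We handle this by stating exactness in the idealized setting implicit in Theorem~\ref{theoremCo1}, where such positions are classified as mixed. We add a remark bounding the failure probability. For a randomly chosen basis state, the probability of agreement per repetition is at most \(\max(\cos^2,\sin^2)(r_t\pi/2n)\le\cos^2(\pi/2n)\). So each partial position is misclassified with probability at most \(2\cos^{2\ell}(\pi/2n)\). A union bound over the at most \(|\mathcal U|\) real positions then gives exact recovery except with probability that is exponentially small in \(\ell\). Care is needed with the single-photon basis conventions, namely how same/opposite is defined relative to the \(|\pm\rangle\) states under \(R_y\). I would treat these conventions as fixed by the proof of Theorem~\ref{theoremCo1}.
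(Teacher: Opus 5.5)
Your proposal is correct and follows essentially the same route as the paper. Both arguments use Theorem~\ref{theoremCo1} to show that $\chi_t=1$ exactly at the positions of $k\mathcal U$ coming from $\bigcap_i X_i$, then invert the bijection $x\mapsto kx \bmod M$; your explicit case check via $\rho_t=1\oplus b_t$ and $\mathcal A^+\cap\mathcal U=\emptyset$, and your finite-$\ell$ misclassification bound, make precise what the paper leaves implicit in its idealized ``mixed'' classification.
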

\begin{proof}
If \(\mathsf{flag}=1\), TP broadcasts \(z^{\mathsf S}\) and \(z^{\mathsf O}\) to all participants. By the correctness in Theorem \ref{theoremCo1}, for each hidden real-domain position \(t\in k\mathcal U\) \(\chi_t=(1-\rho_t)z_t^{\mathsf S}+\rho_t z_t^{\mathsf O}=1\) holds exactly when position \(t\) corresponds to an element contained in every participant's encoded set. Hence,
\(
\{\,t\in k\mathcal U:\chi_t=1\,\}
=
k\!\left(\bigcap_{i=1}^n X_i\right).
\)

Since \(k\in\mathbb Z_M^*\), the map \(x\mapsto kx\bmod M\) is invertible. 
Therefore, each participant can apply \(k^{-1}\bmod M\) to the recovered hidden positions and obtain
\[
\left\{
k^{-1}t\bmod M:
t\in k\mathcal U,\ 
(1-\rho_t)z_t^{\mathsf S}+\rho_t z_t^{\mathsf O}=1
\right\}
=
\bigcap_{i=1}^n X_i .
\]
\end{proof}
\subsection{Security Analysis}
\noindent In this subsection, we analyze the security of the proposed protocol from the following three aspects:
\begin{enumerate}[label=(\arabic*),leftmargin=12pt, itemsep=0pt, parsep=0pt, topsep=2pt]
    \item Any outside eavesdropper cannot obtain information about the private sets of the participants;
    \item TP cannot obtain private information about the participants, including their private sets, the intersection indices, or the exact intersection cardinality;
    \item No participant can obtain extra information about the private sets of other participants even when colluding with other participants.
\end{enumerate}
together with a remark on the detection of tampering behavior when TP modifies the reported measurement results.
\begin{theorem}[Privacy against eavesdroppers]\label{sectionSe1}
Outside eavesdropper can not learn information about the participants' private sets through common attacks.
\end{theorem}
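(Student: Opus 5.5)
The plan is to argue by the standard case analysis used for quantum PSI protocols. We list the attacks an outside eavesdropper Eve can mount on the quantum channels: intercept-resend, measure-resend, entangle-measure, and Trojan-horse attacks. Then we show each either is detected with non-negligible probability or yields no information about any \(Y_i\). The argument has three layers.

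\emph{Layer 1: keys are unavailable to Eve.} Every classical secret is established by the QCKA of \cite{Fu15,QCKA21} among participants or by BB84 QKD \cite{BB84} between TP and each \(P_i\). This covers \(k,\mathbf b,\boldsymbol\Delta_i\) and \(T_i\). By the security of these primitives, Eve has negligible information about them, and we take this as given. The classical messages of \(\Pi_{\mathrm{CT}}^{\tau}\) are exchanged over OLE/GC, whose ideal functionalities reveal nothing beyond \(\mathsf{flag}\) to outsiders. The broadcast of \(z^{\mathsf S},z^{\mathsf O}\) in Phase 5 is uninterpretable without \(\rho\) and \(k\).

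\emph{Layer 2: decoy photons.} Every transmission \(\mathrm{TP}\to P_1\to\cdots\to P_n\to\mathrm{TP}\) carries fresh decoy photons drawn from \(\{|0\rangle,|1\rangle,|+\rangle,|-\rangle\}\) at random positions. Any intercept-resend or measure-resend attack therefore introduces an error on each decoy with probability \(1/4\). With \(\delta\) decoys, detection probability is \(1-(3/4)^{\delta}\). For an entangle-measure attack \(U_E\) on \(|\phi\rangle|E\rangle\), requiring zero disturbance on all four decoy states forces \(U_E\) to act as \(I\otimes\) (fixed map) on Eve's ancilla. Eve's probe is then independent of the signal. Trojan-horse attacks are handled by wavelength filters and photon-number splitters, as is standard.

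\emph{Layer 3: even an undetected signal carries no information.} At every link the state at position \(t\) is \(R_y(\vartheta_{0,t}+\sum_{j\le i}(\vartheta_{j,t}+T_{j,t}+\Delta_{j,t}))|\psi_t\rangle\). Here \(|\psi_t\rangle\) is a random BB84 state, and \(\vartheta_{0,t}\) is uniform and known only to TP. Averaging over \(\vartheta_{0,t}\) makes Eve's view of each qubit a fixed density matrix independent of the data angles \(\vartheta_{j,t}\). Because \(R_y\) rotations commute, the uniform angle absorbs the data-dependent part, and this is the one-time-pad argument for rotations. We would then argue that comparing the sequence before and after \(P_i\) still reveals only \(\vartheta_{i,t}+T_{i,t}+\Delta_{i,t}\). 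That quantity is one-time padded by the QKD-derived \(T_{i,t}\).

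The main obstacle is making Layer 3 rigorous when Eve uses the same qubit on two consecutive links. For a uniformly random angle, the twirled state over \(R_y\) is not maximally mixed for a single qubit: it is confined to the \(x\)--\(z\) plane and averages to \(I/2\). A two-time attack, however, needs the joint state across both links. We would handle this by showing that the decoy check bounds Eve's retained entanglement. Conditioned on passing the check, Eve's probe is essentially uncorrelated with the signal, so a two-link comparison gives her nothing beyond the single-link marginals, which are padded.
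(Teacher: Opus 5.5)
Your proposal is sound at the paper's level of rigor, and your Layer~2 is essentially the paper's whole proof. The paper shows only two things. First, intercept--resend is caught with probability at least $1-(3/4)^{\delta}$. Second, a specific ancilla attack $U_f|x\rangle_d|y\rangle_a=|x\rangle_d|y\oplus f(x)\rangle_a$ leaves Eve's ancilla in the same reduced state for $|+\rangle$ and $|-\rangle$, so it cannot help her without tripping the decoy check. Your general zero-disturbance lemma for $U_E$ subsumes that computation.

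Your Layers~1 and~3 (secrecy of the QCKA/QKD-derived keys, and rotation padding of the signal photons) are additions the paper does not make. They give a guarantee that does not depend on Eve being caught. The paper's detection bound implicitly assumes Eve touches every photon. An Eve who probes only a few signal photons passes the decoy check with high probability, and the paper says nothing about what she learns; your padding argument says she learns nothing about the $Y_i$. The price is an assumption the paper never states: $\vartheta_{0,t}$ and the $T_{i,t}$ must be uniform (over a group containing the shift $\pi/n$) and mutually independent. Also, the $\ell$ sequences are identical, so the relevant object at position $t$ is the block $R_y(\alpha)^{\otimes\ell}|\psi_t\rangle^{\otimes\ell}$, not a single qubit. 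Its twirl is not maximally mixed, but it is data-independent by shift invariance, which is all you need.

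\textbf{Two cautions.} First, your proposed fix for the multi-link obstacle goes the wrong way. The decoy check cannot certify that Eve's probe is uncorrelated with the signal when she attacks only a few positions. A two-link comparison also does reveal more than the single-link marginals, namely the relative angle. You do not need the fix. Set $\beta_{i,t}=\vartheta_{i,t}+T_{i,t}+\Delta_{i,t}$. The tuple $(\vartheta_{0,t},\beta_{1,t},\ldots,\beta_{n,t})$ is i.i.d.\ uniform for every input, because each $T_{i,t}$ comes from an independent pairwise QKD key. Any adaptive strategy is then a comb interleaved with the participants' rotations $R_y(\beta_{i,t})^{\otimes\ell}$ and with data-independent decoy checks. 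Its output, averaged over the keys, is independent of the $Y_i$. This closes Layer~3 exactly, and decoys are then needed only for integrity.

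Second, Layer~1 overstates the privacy of the Phase~5 broadcast. If $z^{\mathsf S},z^{\mathsf O}$ (and $\mathsf{flag}$) are sent in the clear, Eve learns how many positions are deterministic and how many are mixed. That is precisely the aggregate the paper concedes to TP in its privacy-against-TP theorem. Either route these messages over the QKD-keyed channels or record this as leakage.
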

\begin{proof}
Suppose that an outside eavesdropper Eve attempts to learn information about the private sets of the participants from the transmitted photon sequences. However, every transmission is protected by randomly inserted decoy photons chosen from
\(\{|0\rangle,|1\rangle,|+\rangle,|-\rangle\}\), whose positions and preparation bases are unknown to Eve. We consider two representative attacks:

\emph{Intercept–Resend Attack.}
If Eve intercepts a transmitted sequence, measures each photon with a guessed basis, and resends a fabricated sequence, Eve inevitably disturbs a fraction of the decoy photons during her eavesdropping attempt. For each decoy photon, she avoids detection with a maximum probability of $3/4$. Consequently, if $\delta$ decoy photons are transmitted, the probability of Eve bypassing the eavesdropping check is at most $(3/4)^{\delta}$, while the detection probability is at least $1-(3/4)^{\delta}$. 

\emph{Auxiliary-qubit attack.}
Eve may also attach an auxiliary qubit \(|0\rangle_a\) to a transmitted decoy photon \(|\psi\rangle_d\) and apply an oracle-type operation
\(
U_f|x\rangle_d|y\rangle_a
=
|x\rangle_d|y\oplus f(x)\rangle_a .
\)
For computational-basis states, one has
\[
\begin{aligned}
\qquad \qquad U_f|0\rangle_d|0\rangle_a &= |0\rangle_d|f(0)\rangle_a, \\[5pt]
U_f|1\rangle_d|0\rangle_a &= |1\rangle_d|f(1)\rangle_a.
\end{aligned}
\]
Thus, if the decoy photon were known to be in the \(Z\)-basis, Eve might try to encode basis information into the auxiliary qubit. 
However, when the decoy photon is prepared in the \(X\)-basis, namely
\(|\psi\rangle_d\in\{|+\rangle,|-\rangle\}\), we have
\[
\begin{aligned}
&U_f|\pm\rangle_d|0\rangle_a
=
\frac{1}{\sqrt{2}}
\left(
U_f|0\rangle_d|0\rangle_a
\pm
U_f|1\rangle_d|0\rangle_a
\right)\\
&=
\frac{1}{\sqrt{2}}
\left(
|0\rangle_d|f(0)\rangle_a
\pm
|1\rangle_d|f(1)\rangle_a
\right)\\
&=
\frac{1}{\sqrt{2}}
\left[
|+\rangle_d
\frac{|f(0)\rangle_a\pm|f(1)\rangle_a}{\sqrt{2}}
+
|-\rangle_d
\frac{|f(0)\rangle_a\mp|f(1)\rangle_a}{\sqrt{2}}
\right].
\end{aligned}
\]
the reduced state of Eve's auxiliary qubit is identical for the two possible decoy states. 
Hence, Eve cannot distinguish \(|+\rangle\) from \(|-\rangle\) from the auxiliary system. 
Furthermore, Since Eve knows neither the decoy positions nor their bases of $|\psi\rangle_d$, the auxiliary-qubit attack cannot reveal useful information without being exposed by the decoy check.
\end{proof}
\begin{theorem}[Privacy against TP]\label{sectionSe2}
TP learns neither the participants' private sets, nor the intersection indices, nor the exact intersection cardinality.
\end{theorem}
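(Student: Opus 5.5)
We argue by going through TP's view phase by phase and showing that each part is either independent of the private sets or is masked by randomness TP does not know. The view consists of the setup keys, the photon sequences TP receives back, the measurement vectors \(z^{\mathsf S},z^{\mathsf O}\), and its transcript in \(\Pi_{\mathrm{CT}}^{\tau}\).

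\emph{Setup and transmission.} TP knows \(\vartheta_0\) and every pairwise mask \(T_i\). It does not know \(K_{\mathsf P}\), since \(K_{\mathsf P}\) comes from a TP-free QCKA run among the participants only. Consequently \(k\), \(\mathbf b\), \(\boldsymbol\Theta\) and the shares \(\boldsymbol\Delta_i\) are unknown to TP, and so are \(\rho\), \(m_{\mathcal U}\) and \(m_{\mathcal A}\), which are derived from them. If TP tries to measure intermediate sequences in transit, it is exposed by the decoy-photon check, exactly as for Eve in Theorem~\ref{sectionSe1}. After TP undoes its own rotations, the residual angle at position \(t\) is \(r_t\pi/n+b_t\pi\), as computed in the proof of Theorem~\ref{theoremCo1}.

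\emph{Measurement vectors.} The central claim is that the joint distribution of \((z^{\mathsf S},z^{\mathsf O})\) reveals to TP only which positions are "deterministic" versus "mixed", and nothing about their labels or about which original elements they correspond to.
\begin{enumerate}[label=(\roman*),leftmargin=14pt, itemsep=0pt, parsep=0pt, topsep=2pt]
\item \emph{Labels.} Fix a deterministic position \(t\). It is either an intersection position, \(r_t=n\), or a position held by nobody, \(r_t=0\). By the case analysis of Theorem~\ref{theoremCo1}, the label is all-same or all-opposite according to \(b_t\oplus\mathbf 1[r_t=n]\). Because \(b_t\) is uniform and unknown to TP, this label is a uniform bit independent of \(r_t\), exactly like a one-time pad.
\item \emph{Positions.} The permutation \(x\mapsto kx \bmod M\) with secret \(k\), together with the mixing in of the anchors \(\mathcal A^+\) (which are always deterministic) and \(\mathcal A^-\) (which are also deterministic), means TP cannot tell which deterministic positions are real-domain positions. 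Hence TP also cannot tell which of them are intersection positions.
\item \emph{Consequence.} TP learns at most the number of deterministic positions, namely \(|\bigcap X_i|+|\{\text{real } x \text{ held by nobody}\}|+|\mathcal A^+|+|\mathcal A^-|\). Since the anchor sizes are public, TP learns only this aggregate sum. It cannot split the sum into intersection and non-held parts without knowing \(\mathbf b\), and therefore does not learn \(|\bigcap X_i|\) exactly.
\end{enumerate}

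\emph{Cardinality testing.} We invoke the ideal \(\mathcal F_{\mathsf{OIP}}^p\). The shares TP receives are masked by the uniform \(r_t\), so each share is uniform in \(\mathbb F_p\) and can be simulated. By the privacy of garbled-circuit evaluation, TP's view of \(C_{\mathrm{CT}}\) can be simulated from \(\mathsf{flag}\) alone. So this phase adds nothing beyond the one-bit predicate, which the functionality is allowed to reveal.

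\emph{Reconstruction.} When \(\mathsf{flag}=1\), TP only broadcasts data it already holds and receives nothing in return.

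\emph{Main obstacle.} The hardest step is making the position-hiding argument rigorous. Multiplication by \(k\in\mathbb Z_M^*\) is far from a uniform permutation, since only \(\varphi(M)\) keys are possible. The count of deterministic positions also does leak a linear combination of statistics. We would first formalize the claim as "TP's view is independent of the intersection indices and of the exact cardinality given the public parameters". We would then argue that, for any two input tuples with the same deterministic-position count, averaging over \(\mathbf b\) makes the label vectors identically distributed, and that the remaining structural leakage is bounded by TP's uncertainty over \(k\). If exact indistinguishability fails, we would weaken the statement to this bounded-leakage form.
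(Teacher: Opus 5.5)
Apart from the private-set part, your argument tracks the paper's parts (2)--(3). For private sets, instead of the paper's entangled-probe and USD analysis of the state leaving $P_i$, you use the cleaner observation that the returned state depends on the inputs only through $r_t\pi/n+b_t\pi$. The genuine gap is the step you flag yourself, and it is not only a matter of rigor: claim (ii) is false for this encoding, so the bound in (iii) does not follow.

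The map $x\mapsto kx\bmod M$ fixes $0$ for every $k$, and fixes $M/2$ when $M$ is even (then $k$ is odd). So TP knows which real element sits at these hidden positions. In the paper's toy instance, positions $0$ and $4$ are always real$(0)$ and real$(4)$, and TP sees that position $4$ is mixed, i.e.\ element $4$ is held by some but not all participants. Moreover, the anchors occupy the publicly known original positions $\{q,\dots,M-1\}$ and are always deterministic. TP can therefore run through the $\varphi(M)$ candidates $k'$ and discard every $k'$ for which $k'\cdot\{q,\dots,M-1\}\bmod M$ hits a mixed position. With deterministic positions $\{0,1,2,3,5\}$ and anchors $\{6,7\}$, only $k'\in\{3,7\}$ survive. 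Both tell TP that elements $2$ and $4$ are mixed (hence outside the intersection) and that elements $0$ and $3$ are deterministic.

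So TP does learn partial information about the private sets and about the set $\{x:r_x\in\{0,n\}\}\supseteq\bigcap_iX_i$, which equals the intersection whenever $\bigcup_iX_i=\mathcal U$. This contradicts both ``TP cannot tell which deterministic positions are real-domain positions'' and ``TP learns at most the number of deterministic positions''. Only the narrower conclusion survives, that TP cannot single out intersection positions among deterministic ones, and it comes from (i) alone. The paper's proof supplies no extra idea here: it infers position hiding from the mere fact that $k$ is unknown to TP, exactly as you do. Your fallback is therefore a change of statement, not a completion.

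What can actually be proved is (i) promoted to a leakage statement. Write $r_x$ for the number of participants holding $x$. With ideal OIP and GC, a semi-honest TP's view can be simulated from $(\min(r_x,n-r_x))_{x\in\mathcal U}$ and $\mathsf{flag}$, with the simulator sampling $k$ and $\mathbf b$ itself. This works because deterministic labels are one-time padded by $b_t$, and at a mixed position averaging over $b_t$ makes $r_t=s$ and $r_t=n-s$ produce the same outcome distribution. Two corrections to your view are needed:
\begin{itemize}
\item TP sees the raw $\ell$ outcomes, not just $z^{\mathsf S},z^{\mathsf O}$. The empirical same-frequency $\cos^2(r_t\pi/2n)$ or $\sin^2(r_t\pi/2n)$ reveals $\min(r_t,n-r_t)$, which is strictly more than the deterministic/mixed bit once $n\ge 4$.
\item $\mathsf{flag}$ must be composed with the quantum-phase leakage rather than simulated in isolation. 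TP knows $D_{\mathrm{real}}=|\bigcap_iX_i|+|\mathcal U\setminus\bigcup_iX_i|$, so $\mathsf{flag}=1$ together with $D_{\mathrm{real}}=\tau$ (or simply $D_{\mathrm{real}}=0$) gives $|\bigcap_iX_i|$ exactly.
\end{itemize}
Hence ``no exact cardinality'' holds only when this leaked data does not already pin it down.
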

\begin{proof}
(1) \emph{TP cannot learn any participants' private sets.}
To recover the private set \(X_i\), TP would have to determine the encoded membership vector \(Y_i\), or equivalently, distinguish whether \(P_i\) applies the data-dependent rotation \(\vartheta_{i,t}=0\) or \(\vartheta_{i,t}=\pi/n\) at each position \(t\).

TP may attempt to prepare entangled photons instead of single photons, keep one subsystem, and send the other subsystem through the protocol. 
Let \(\rho_{AB}\) be such an entangled state, where subsystem \(A\) is retained by TP and subsystem \(B\) is sent to \(P_i\). 
If \(P_i\) applies a local unitary \(U_i\) on subsystem \(B\), then the reduced state available to TP is
\[
\operatorname{Tr}_{B}\!\left[(I\otimes U_i)\rho_{AB}(I\otimes U_i^\dagger)\right]
=
\operatorname{Tr}_{B}(\rho_{AB}).
\]
Hence, the subsystem kept by TP is independent of the rotation applied by \(P_i\). TP may also intercept the sequence immediately after \(P_i\) and attempt to distinguish whether \(P_i\) applies \(\vartheta_{i,t}=0\) or \(\vartheta_{i,t}=\pi/n\) at position \(t\) via unambiguous state discrimination (USD) test \cite{USD,Chefles98}. 
Let \(\lvert\phi_{i,t}\rangle\) be the photon received by \(P_i\). 
The two possible output states are
\[
\begin{aligned}
\lvert\phi_0\rangle
&=R_y(T_{i,t}+\Delta_{i,t})\lvert\phi_{i,t}\rangle,\\
\lvert\phi_1\rangle
&=R_y\!\left(\frac{\pi}{n}+T_{i,t}+\Delta_{i,t}\right)\lvert\phi_{i,t}\rangle .
\end{aligned}
\]
For any fixed incoming state, they can be written as
\[
\begin{aligned}
\lvert\phi_0\rangle
&=\cos(\omega)\lvert0\rangle+\sin(\omega)\lvert1\rangle,\\
\lvert\phi_1\rangle
&=\cos\!\left(\omega+\frac{\pi}{2n}\right)\lvert0\rangle
+\sin\!\left(\omega+\frac{\pi}{2n}\right)\lvert1\rangle,
\end{aligned}
\]
where \(\omega\) collects all rotations common to the two cases. 
For two nonorthogonal pure states, the optimal USD success probability is governed by their overlap~\cite{Chefles98}. Hence,
\[
\mathsf{Prob}^{\mathrm{USD}}
=
1-\left|\langle \phi_0 \mid \phi_1\rangle\right|
=
1-\cos\left(\frac{\pi}{2n}\right).
\]
This bound already assumes that TP knows the incoming state and the common rotation offset. 
In the actual protocol, TP does not know \(\Delta_{i,t}\); moreover, the encoded vector \(Y_i\) is generated after the hiding map \(x\mapsto kx\bmod M\), where \(k\) is unknown to TP; 
For \(i>1\), the incoming photons are further masked by preceding participants. 
Hence, TP cannot reliably infer \(Y_{i,t}\), and therefore cannot recover \(P_i\)'s private set.

(2) \emph{TP cannot interpret \(z^{\mathsf S}\), \(z^{\mathsf O}\).} After the measurement phase, the semantic meaning of these labels is determined by the participant-side flip vector \(\mathbf b\), which is realized through the correlated aggregate rotations \(\{\Delta_{i,t}\}_{i=1}^n\) and is unknown to TP. 
Moreover, TP does not know the hidden real-domain selector \(m_{\mathcal U}\), the anchor selector \(m_{\mathcal A}\), or the position-hiding key \(k\). 
Hence, TP can neither map measurement labels to actual intersection coordinates nor distinguish between real-domain and anchor positions.

(3) \emph{TP cannot recover the cardinality.} Since TP cannot interpret \(z^{\mathsf S}\) and \(z^{\mathsf O}\), it cannot derive the exact intersection cardinality directly from the quantum measurement results. 
The classical cardinality-testing phase does not reveal it either: the OIP calls provide only additive shares of the required inner products, and the garbled circuit outputs only the predicate bit \(\mathsf{flag}\). At most, by counting the deterministic outcomes represented by \(z^{\mathsf S}\) and \(z^{\mathsf O}\), TP may infer the aggregate
$\biggl| \left|\bigcap_{i=1}^{n}X_i\right| + \left|\mathcal U\setminus\bigcup_{i=1}^{n}X_i\right| + |\mathcal A^+| + |\mathcal A^-| \biggr|$ up to the statistical classification error controlled by \(\ell\) which is insufficient to determine the exact intersection cardinality.
\end{proof}
\begin{remark}[Anchor-based tamper detection]
Although TP is assumed to be semi-honest, the anchor positions provide a lightweight consistency check against modification of the reported vectors \(z^{\mathsf S}\) and \(z^{\mathsf O}\). 
Suppose that TP modifies the reported measurement labels at \(r\) distinct positions. Any nontrivial modification on an anchor position violates the anchor-consistency condition \(d_{\mathcal A}=0\). 
Hence, TP avoids detection only if all modified positions fall inside the hidden real domain \(k\mathcal U\) with \(\mathsf{Prob}^{\text{Undetected}}={\binom{|\mathcal U|}{r}}/{\binom{M}{r}}.\)
\end{remark}
\begin{theorem}[Privacy against other participants]\label{sectionSe3}
No participant learns information about another participant's private set even in collusion with other participants.
\end{theorem}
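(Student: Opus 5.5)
We prove the statement by fixing a target participant \(P_j\) and a coalition \(\mathcal C\subseteq[n]\setminus\{j\}\) of colluding participants. We then show that the coalition's joint view can be generated from its own inputs, the shared participant-side secrets \((K_{\mathsf P},k,\mathbf b,\rho,m_{\mathcal U},m_{\mathcal A})\), and the protocol's public outputs (\(\mathsf{flag}\) and, when \(\mathsf{flag}=1\), the intersection \(\bigcap_i X_i\)), which every participant is entitled to. The view splits into three parts, which we handle in order: the quantum channel, the classical cardinality-testing phase, and the reconstruction phase.

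\emph{Quantum channel.} The coalition could try to learn \(Y_{j,t}\) from the photons it receives before and after \(P_j\) acts. Suppose \(P_{j-1}\) and \(P_{j+1}\) both collude; this is the strongest case. They then see the state before and after \(P_j\) applies \(R_y(\vartheta_{j,t}+T_{j,t}+\Delta_{j,t})\). The rotation \(T_{j,t}\) comes from the QKD key shared only between TP and \(P_j\). Since TP does not collude with participants, \(T_{j,t}\) is uniform and unknown to the coalition, so it acts as a one-time pad on the angle. The net rotation is uniformly distributed whether \(\vartheta_{j,t}=0\) or \(\pi/n\), and the coalition's reduced state is therefore independent of \(Y_{j,t}\).

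This holds even if the coalition uses entangled probes. We argue this exactly as in the proof of Theorem~\ref{sectionSe2}(1): averaging over uniform \(T_{j,t}\) sends \(R_y(\alpha)\sigma R_y(\alpha)^\dagger\) to the same twirled state for both values of \(\alpha\). Knowledge of \(\Delta_{j,t}\), which colluders can derive from \(K_{\mathsf P}\), does not help, because \(T_{j,t}\) alone already hides the angle. Any tampering the coalition attempts on the photons is caught by the decoy checks of Theorem~\ref{sectionSe1}.

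\emph{Classical phase.} In \(\Pi_{\mathrm{CT}}^{\tau}\), the participant side interacts only with TP, through \(\mathcal F_{\mathsf{OIP}}^p\) and the garbled circuit. By the ideal OIP functionality, the participant-side shares \(a_R^{\mathrm P},c_R^{\mathrm P}\) are uniformly random field elements, so they are simulable. By garbled-circuit privacy, the only value revealed is \(\mathsf{flag}\). The inputs \(\rho,m_{\mathcal U},m_{\mathcal A}\) are functions of \(K_{\mathsf P}\), not of \(X_j\), so nothing about \(X_j\) enters this part of the view.

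\emph{Reconstruction.} When \(\mathsf{flag}=1\), TP broadcasts \(z^{\mathsf S},z^{\mathsf O}\), and the coalition can decode them using \(k,\rho\). This is the step we expect to be the main obstacle. By Theorem~\ref{theoremCo1}, the vectors separate positions into three classes: deterministic intersection positions, deterministic \(r_t=0\) positions, and mixed positions. The intersection class is the legitimate output. For the statement as worded, we would argue as follows.

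For any \(x\) held by some colluder \(P_c\), \(x\notin\bigcap_i X_i\) implies only that \(x\notin X_j\) or that some other honest party lacks \(x\), and this is part of the ideal output. For \(x\) held by no colluder, the coalition can see whether \(r_t=0\) or \(1\le r_t\le n-1\). We would argue that with two or more honest participants outside \(\mathcal C\), this distinction does not identify which honest participant holds \(x\). Position \(t\) is mixed whenever any honest party holds \(x\), so it cannot be attributed to \(P_j\) specifically. We would therefore claim that no information is gained about \(P_j\)'s set \emph{individually}.

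Finally, we would note the boundary case in which \(\mathcal C\) contains all participants except \(P_j\). There the mixed/zero distinction for coalition-absent elements is exactly membership in \(X_j\), and the argument does not go through. Absent a further masking mechanism, the plan would either restrict to at least two honest participants or observe that this residual information is what the intersection functionality combined with set complements already determines.
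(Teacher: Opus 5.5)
\textbf{The gap is in the reconstruction step you flagged, and neither of your two ways out works.} Every participant derives \(k\) and \(\mathbf b\) from \(K_{\mathsf P}\). So once TP broadcasts the raw vectors \(z^{\mathsf S},z^{\mathsf O}\), each participant decodes, for every \(x\in\mathcal U\), the full class of \(t=kx\bmod M\): \(r_t=0\), \(1\le r_t\le n-1\), or \(r_t=n\).

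Your claim that, with at least two honest parties, no information is gained about \(P_j\)'s set ``individually'' conflates non-attributability with privacy. Suppose some \(x\notin\bigcup_{c\in\mathcal C}X_c\) lands in the deterministic \(r_t=0\) class. Then the coalition learns \(x\notin X_j\) with certainty. A mixed outcome, for its part, raises the posterior probability that \(x\in X_j\). The ideal output \(\bigcap_iX_i\) contains no such \(x\) in either case, so it implies neither fact.

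Your fallback, that this residual is what the intersection together with set complements already determines, is also false. A coalition of all participants except \(P_j\) learns \(X_j\setminus\bigcup_{c\in\mathcal C}X_c\) exactly. For \(n=2\), a single participant recovers all of \(X_j\): the part inside its own set from the intersection, and the rest from zero versus mixed. The paper's toy model already exhibits this: position \(t=0\) (element \(0\)) is all-same with \(b_0=0\), which every participant reads as ``held by nobody''.

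So the step cannot be closed: for the protocol as specified, the statement as worded is false. A correct theorem needs one of two changes. Either weaken the claim (privacy of the transmission and testing phases only, or of runs with \(\mathsf{flag}=0\)), or modify Phase~5 so that it reveals only \(\chi_t\). For example, run one OLE per coordinate in which the participant inputs \(\rho_t\) and TP inputs \((z^{\mathsf O}_t-z^{\mathsf S}_t,\,z^{\mathsf S}_t)\). The participant then obtains exactly \(\chi_t=(1-\rho_t)z^{\mathsf S}_t+\rho_t z^{\mathsf O}_t\), which is \(0\) on both the \(r_t=0\) and the mixed classes.

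\textbf{Comparison with the paper.} The paper's proof does not supply the missing piece, because it never examines Phase~5. It argues only about the photons: \(P_{i+1}\) knows neither \(\lvert\phi_{i,t}\rangle\) nor \(T_{i,t}\); neighbors \(P_{i-1},P_{i+1}\) who substitute a known probe face the USD bound of Theorem~\ref{sectionSe2}, with \(T_{i,t}\) in the role of \(\Delta_{i,t}\); and larger coalitions still lack \(T_i\).

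On that part your route is different and stronger. Treating a uniform \(T_{j,t}\) as a one-time pad on the angle and twirling gives exact independence from \(Y_{j,t}\). This holds for entangled probes, and even though the same \(T_{j,t}\) is reused on all \(\ell\) sequences, since shift invariance applies equally to \(R_y(\cdot)^{\otimes\ell}\). The USD bound, by contrast, presupposes a known offset and only bounds per-photon discrimination.

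Two adjustments are needed. First, state the assumption your twirl requires: \(T_{j,t}\) is uniform on \([0,2\pi)\) or on a finite group of angles containing \(\pi/n\), which the paper leaves unspecified. Second, drop the appeal to decoy checks. The colluding sender \(P_{j-1}\) inserts its own decoys, so its substitutions are invisible to them, and the twirl does not need them anyway.
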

\begin{proof}
We consider an target participant \(P_i\) and assume, as TP does not collude with any participant.  To recover \(X_i\), a dishonest participant or a coalition of participants must infer the encoded membership bit \(Y_{i,t}\) at each position \(t\), namely, whether \(P_i\) applies \(\vartheta_{i,t}=0\) or \(\vartheta_{i,t}=\pi/n\).

\emph{Direct observation by the next participant.}
Consider first the participant \(P_{i+1}\), who receives the photon sequence immediately after \(P_i\). 
At position \(t\), the outgoing state of \(P_i\) is of the form
\(
R_y(\vartheta_{i,t}+T_{i,t}+\Delta_{i,t})\lvert\phi_{i,t}\rangle.
\)
The next participant does not know \(\lvert\phi_{i,t}\rangle\), nor the TP--\(P_i\) secret rotation \(T_{i,t}\).  In particular, for \(i=1\), the state sent to \(P_1\) is already blinded by TP's initial rotation \(\vartheta_{0,t}\). 

\emph{Collusion of neighboring participants.}
Now consider a stronger attack in which \(P_{i-1}\) and \(P_{i+1}\) collude to recover \(P_i\)'s private set.  They may even replace the legitimate incoming photon by a known state \(|\varphi_t\rangle\) before sending it to \(P_i\), and then inspect the state returned by \(P_i\). In this case, the security argument is essentially the same as the USD-based analysis in Part~(1) of Section~\ref{sectionSe2}. The only difference is that the unknown blinding rotation is now the \(T_{i,t}\) rather than \(\Delta_{i,t}\). Hence, the same USD bound applies. 

The same argument extends to any larger coalition of participants: regardless of how many other participants conspire, the target participant's TP-shared mask \(T_i\) remains unavailable to them. 
Therefore, they cannot recover \(Y_i\), and hence cannot obtain the private set \(X_i\).
\end{proof}

\begin{table*}[t]
\centering
\caption{Rotation vectors used in the toy-model simulation.}
\label{rotation-vectors}
\renewcommand{\arraystretch}{1.5} % 增加行高以适应分式
\setlength{\tabcolsep}{3pt}
\begin{tabular}{c c c c c}
\toprule
Participant
& Data rotation \(\boldsymbol{\vartheta}_i\)
& Pairwise mask \(T_i\)
& Flip share \(\boldsymbol{\Delta}_i\)
& Total rotation \(\boldsymbol{\vartheta}_i+T_i+\boldsymbol{\Delta}_i\) \\
\midrule
\(P_1\)
& \((0, \frac{\pi}{3}, \frac{\pi}{3}, \frac{\pi}{3}, \frac{\pi}{3}, 0, 0, 0)\)
& \((\frac{\pi}{12}, \frac{\pi}{4}, \frac{5\pi}{12}, \frac{7\pi}{12}, \frac{3\pi}{4}, \frac{11\pi}{12}, \frac{\pi}{6}, \frac{\pi}{3})\)
& \((\frac{\pi}{6}, \frac{\pi}{4}, \frac{\pi}{3}, \frac{5\pi}{12}, \frac{\pi}{6}, \frac{\pi}{4}, \frac{\pi}{3}, \frac{5\pi}{12})\)
& \((\frac{\pi}{4}, \frac{5\pi}{6}, \frac{13\pi}{12}, \frac{4\pi}{3}, \frac{5\pi}{4}, \frac{7\pi}{6}, \frac{\pi}{2}, \frac{3\pi}{4})\) \\

\(P_2\)
& \((0, \frac{\pi}{3}, \frac{\pi}{3}, \frac{\pi}{3}, 0, 0, \frac{\pi}{3}, \frac{\pi}{3})\)
& \((\frac{\pi}{6}, \frac{\pi}{3}, \frac{\pi}{2}, \frac{2\pi}{3}, \frac{5\pi}{6}, 0, \frac{\pi}{4}, \frac{5\pi}{12})\)
& \((\frac{\pi}{3}, \frac{\pi}{6}, \frac{\pi}{4}, \frac{\pi}{6}, \frac{\pi}{3}, \frac{\pi}{6}, \frac{\pi}{4}, \frac{\pi}{6})\)
& \((\frac{\pi}{2}, \frac{5\pi}{6}, \frac{13\pi}{12}, \frac{7\pi}{6}, \frac{7\pi}{6}, \frac{\pi}{6}, \frac{5\pi}{6}, \frac{11\pi}{12})\) \\

\(P_3\)
& \((0, \frac{\pi}{3}, \frac{\pi}{3}, \frac{\pi}{3}, \frac{\pi}{3}, 0, 0, \frac{\pi}{3})\)
& \((\frac{\pi}{4}, \frac{5\pi}{12}, \frac{7\pi}{12}, \frac{3\pi}{4}, \frac{11\pi}{12}, \frac{\pi}{12}, \frac{\pi}{3}, \frac{\pi}{2})\)
& \((\frac{3\pi}{2}, \frac{7\pi}{12}, \frac{17\pi}{12}, \frac{5\pi}{12}, \frac{3\pi}{2}, \frac{7\pi}{12}, \frac{17\pi}{12}, \frac{17\pi}{12})\)
& \((\frac{7\pi}{4}, \frac{4\pi}{3}, \frac{7\pi}{3}, \frac{3\pi}{2}, \frac{11\pi}{4}, \frac{2\pi}{3}, \frac{7\pi}{4}, \frac{9\pi}{4})\) \\
\bottomrule
\end{tabular}
\end{table*}

\section{Simulation and Performance Evaluation}\label{sectionSimulation}
\subsection{Experimental Simulation}
\noindent\emph{\textbf{Toy Model.}}
To illustrate the quantum phase of the proposed protocol, we consider a toy instance with \(n=3\) participants.  The original domain is \(\mathcal U=\{0,1,2,3,4,5\}\), and we introduce one positive anchor and one negative anchor, \(\mathcal A^+=\{6\}\) and \(\mathcal A^-=\{7\}\).  Hence, \(\widetilde{\mathcal U}=\mathcal U\cup\mathcal A^+\cup\mathcal A^-\) and \(M=|\widetilde{\mathcal U}|=8\). We set the threshold as \(\tau=2\). 
The private sets of the three participants are \(X_1=\{1,3,4\},X_2=\{1,2,3,5\},X_3=\{1,3,4,5\},
\) whose real intersection is \(X_1\cap X_2\cap X_3=\{1,3\}\).

Each participant appends the positive anchor set and apply the position-hiding map with hiding key \(k=3\in\mathbb Z_8^*\). This yields the position
\(
k\widetilde{\mathcal U}=\{0,1,2_{\mathcal A^+},3,4,5_{\mathcal A^-},6,7\},
\)
and
\(
k\widetilde X_1=\{1,2,3,4\},
k\widetilde X_2=\{1,2,3,6,7\},
k\widetilde X_3=\{1,2,3,4,7\}.
\)
Accordingly, the encoded indicator vectors are
\(
Y_1=(0,1,1,1,1,0,0,0),
Y_2=(0,1,1,1,0,0,1,1),
Y_3=(0,1,1,1,1,0,0,1).
\)

For the hidden semantic flip, we fix \(\mathbf b=(0,1,0,1,0,1,\allowbreak 0,0),\)
which determines the aggregate flip angles $\boldsymbol\Theta=(0,\pi,0,\pi,0,\pi,0,0) \pmod {2\pi}.$ The resulting participant-side reference label vector is \(\rho=(1,0,1_{\mathcal A^+},0,1,1_{\mathcal A^-},1,1).\)
The corresponding correlated aggregate rotation shares are chosen as 
$\Delta_1 = (\frac{\pi}{6}, \allowbreak \frac{\pi}{4}, \allowbreak \frac{\pi}{3}, \allowbreak \frac{5\pi}{12}, \allowbreak \frac{\pi}{6}, \allowbreak \frac{\pi}{4}, \allowbreak \frac{\pi}{3}, \allowbreak \frac{5\pi}{12})$, 
$\Delta_2 = (\frac{\pi}{3}, \allowbreak \frac{\pi}{6}, \allowbreak \frac{\pi}{4}, \allowbreak \frac{\pi}{6}, \allowbreak \frac{\pi}{3}, \allowbreak \frac{\pi}{6}, \allowbreak \frac{\pi}{4}, \allowbreak \frac{\pi}{6})$, 
$\Delta_3 = (\frac{3\pi}{2}, \allowbreak \frac{7\pi}{12}, \allowbreak \frac{17\pi}{12}, \allowbreak \frac{5\pi}{12}, \allowbreak \frac{3\pi}{2}, \allowbreak \frac{7\pi}{12}, \allowbreak \frac{17\pi}{12}, \allowbreak \frac{17\pi}{12})$, 
which satisfy $\Delta_{1,t}+\Delta_{2,t}+\Delta_{3,t} \equiv \Theta_t \pmod{2\pi}$ for every $t \in \{0,\dots,7\}$. For the TP--participant pairwise rotation masks, we take 
$T_1 = (\frac{\pi}{12}, \allowbreak \frac{\pi}{4}, \allowbreak \frac{5\pi}{12}, \allowbreak \frac{7\pi}{12}, \allowbreak \frac{3\pi}{4}, \allowbreak \frac{11\pi}{12}, \allowbreak \frac{\pi}{6}, \allowbreak \frac{\pi}{3})$, 
$T_2 = (\frac{\pi}{6}, \allowbreak \frac{\pi}{3}, \allowbreak \frac{\pi}{2}, \allowbreak \frac{2\pi}{3}, \allowbreak \frac{5\pi}{6}, \allowbreak 0, \allowbreak \frac{\pi}{4}, \allowbreak \frac{5\pi}{12})$, 
$T_3 = (\frac{\pi}{4}, \allowbreak \frac{5\pi}{12}, \allowbreak \frac{7\pi}{12}, \allowbreak \frac{3\pi}{4}, \allowbreak \frac{11\pi}{12}, \allowbreak \frac{\pi}{12}, \allowbreak \frac{\pi}{3}, \allowbreak \frac{\pi}{2})$. 
TP further samples the initial blinding rotation $\boldsymbol\vartheta_0 = (\frac{\pi}{12}, \allowbreak \frac{\pi}{6}, \allowbreak \frac{\pi}{4}, \allowbreak \frac{\pi}{3}, \allowbreak \frac{5\pi}{12}, \allowbreak \frac{\pi}{2}, \allowbreak \frac{7\pi}{12}, \allowbreak \frac{2\pi}{3})$ and prepares the initial photon sequence
\(
S_0=
\bigl(
|0\rangle,\ |+\rangle,\ |1\rangle,\ |-\rangle,\
|0\rangle,\ |+\rangle,\allowbreak \ |1\rangle,\allowbreak \ |-\rangle
\bigr).
\) The summary of rotation in our toy model is shown in Table \ref{rotation-vectors}.

\begin{figure*}[t]  
    \centering
    \includegraphics[width=0.8\linewidth]{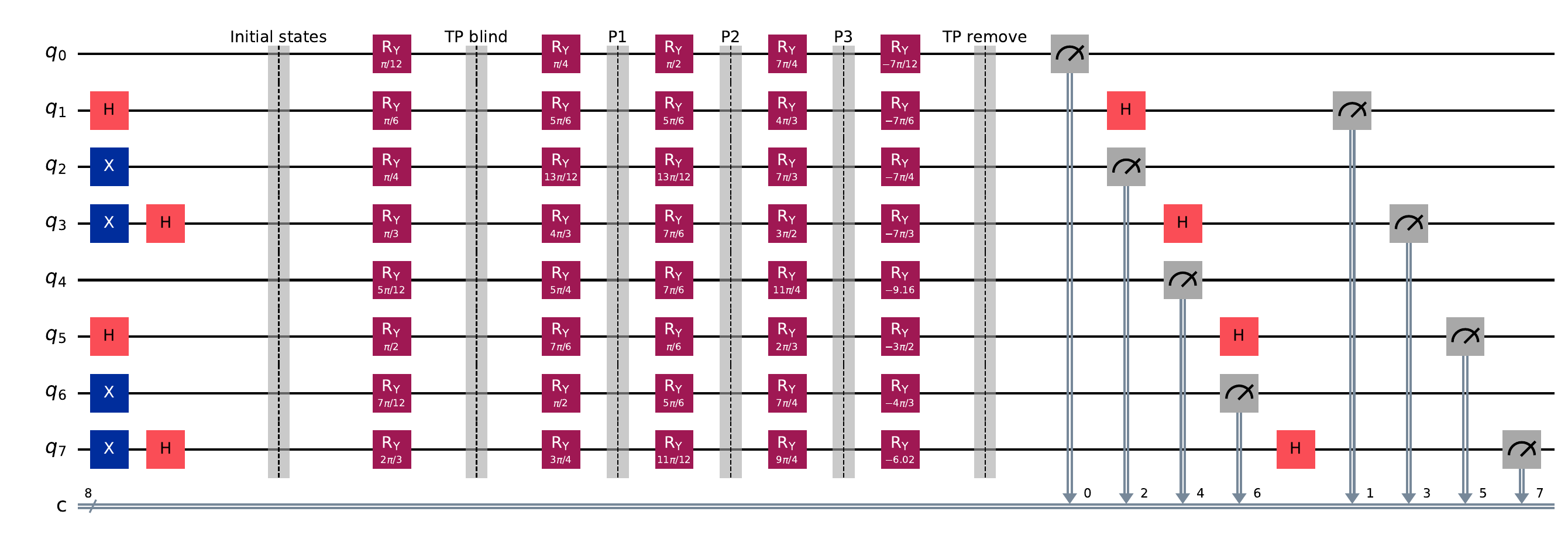}
    \caption{Quantum Circuit of toy model.}
    \label{figQuantumcircuit}
\end{figure*}
\begin{figure*}[t]  
    \centering
    \includegraphics[width=0.75\linewidth]{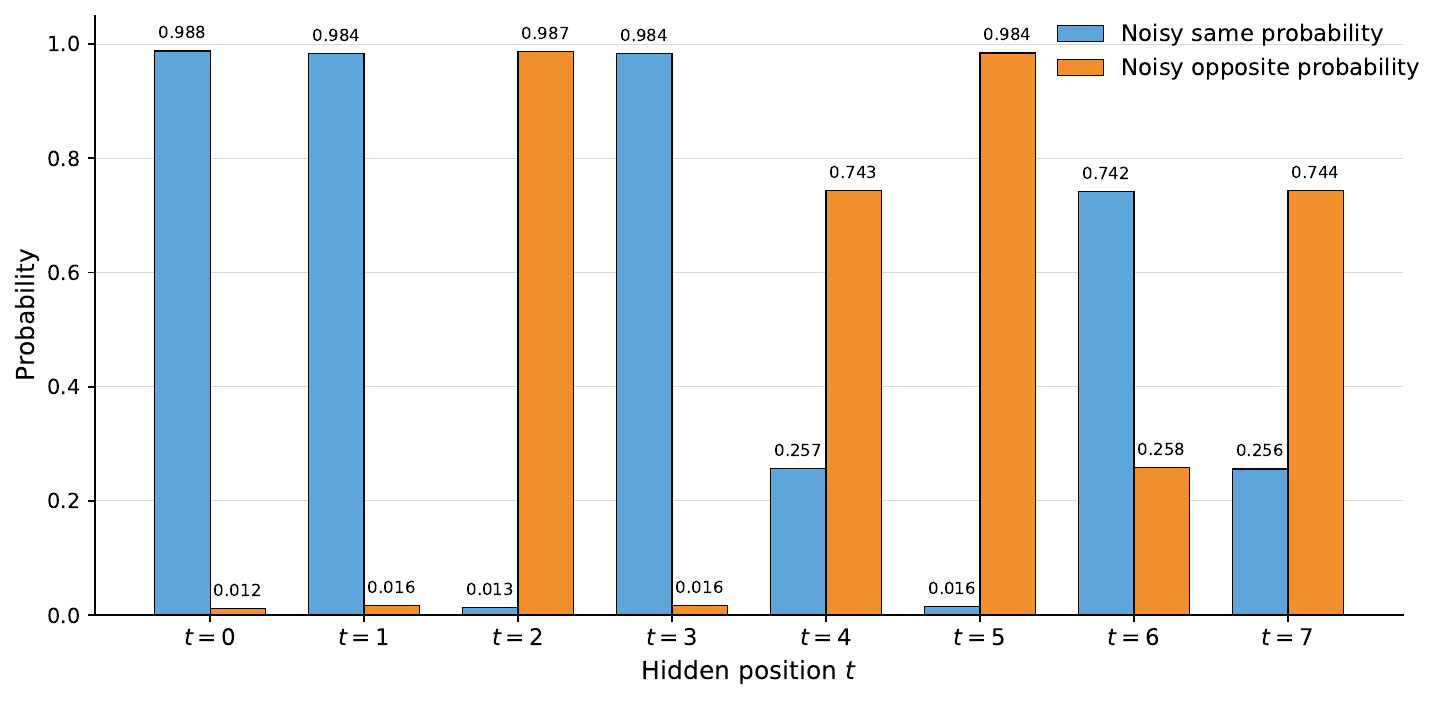}
    \caption{Noisy Probabilities.}
    \label{figProbabilities}
\end{figure*}
\begin{table}[t]
\centering
\small
\caption{Measurement results of the toy-model simulation.}
\label{measurement-results}
\renewcommand{\arraystretch}{1.2}
\setlength{\tabcolsep}{5pt}
\begin{tabular}{c c c c c c c}
\toprule
\multirow{2}{*}{\(t\)}
& \multirow{2}{*}{Origin}
& \multirow{2}{*}{Class}
& \multicolumn{2}{c}{Theory}
& \multicolumn{2}{c}{Noisy} \\
\cmidrule(lr){4-5}\cmidrule(lr){6-7}
& &
& Same
& Opp.
& Same
& Opp. \\
\midrule
0 & real(0) & all-same     & 1.000 & 0.000 & 0.988 & 0.012 \\
1 & real(3) & all-same     & 1.000 & 0.000 & 0.984 & 0.016 \\
2 & \(\mathcal{A}^+\) & all-opposite & 0.000 & 1.000 & 0.013 & 0.987 \\
3 & real(1) & all-same     & 1.000 & 0.000 & 0.984 & 0.016 \\
4 & real(4) & mixed        & 0.250 & 0.750 & 0.257 & 0.743 \\
5 & \(\mathcal{A}^-\) & all-opposite & 0.000 & 1.000 & 0.016 & 0.984 \\
6 & real(2) & mixed        & 0.750 & 0.250 & 0.742 & 0.258 \\
7 & real(5) & mixed        & 0.250 & 0.750 & 0.256 & 0.744 \\
\bottomrule
\end{tabular}
\end{table}

\textbf{\textsf{Qiskit}} \emph{\textbf{Simulation.}}
Based on the above toy model, we implement the quantum phase of the proposed protocol using \textsf{Qiskit}. 
The corresponding quantum circuit is shown in Fig.~\ref{figQuantumcircuit}. 
For each hidden position \(t\), the residual rotation after TP removes its initial blinding rotation and the pairwise masks is \(\frac{r_t\pi}{n}+b_t\pi,\) where \(r_t=\sum_{i=1}^{n}Y_{i,t}\) is the number of participants holding the encoded element at position \(t\). 
Hence, the theoretical probabilities of obtaining the same and opposite measurement relations are
\[
p_t^{\mathsf S}
=
\cos^2\!\left(
\frac{r_t\pi/n+b_t\pi}{2}
\right),
\quad
p_t^{\mathsf O}
=
\sin^2\!\left(
\frac{r_t\pi/n+b_t\pi}{2}
\right).
\]

To evaluate the robustness of the quantum phase, we perform noisy simulation by adding depolarizing noise with rate \(0.2\%\), phase-damping noise with rate \(0.4\%\), and readout error with rate \(0.5\%\) which can be realized through \textsf{Qiskit\_aer.noise}. These channels are chosen to reflect representative gate, dephasing, and measurement imperfections in the simulated quantum execution.
The resulting noisy same/opposite probabilities are plotted in Fig.~\ref{figProbabilities}, and their comparison with the theoretical values is summarized in Table~\ref{measurement-results}. 

Using \(0.9\) as the acceptance threshold for deterministic classification, the  noisy simulation yields
\(
z_{\mathrm{sim}}^{\mathsf S} = (1,1,0,1,0,0,0,0),z_{\mathrm{sim}}^{\mathsf O} =(0,0,1,0,0,1,0,0),
\)
which exactly agree with the theoretical expectation. 
Therefore, when these simulated outputs are supplied to the classical subprotocol 
\hyperref[subprotocol]{\(\Pi_{\mathrm{CT}}^{\tau}\)}, the anchor-consistency check is satisfied and the threshold condition is accepted. 
The participants can then reconstruct the real intersection as \(X_1\cap X_2\cap X_3=\{1,3\}.\)

\subsection{Performance and Comparison}
\noindent\emph{\textbf{Quantum Communication Cost.}}
In the quantum phase, TP prepares \(\ell\) photon sequences of length \(M\), i.e., \(\ell M\) signal photons in total. 
These sequences are transmitted along the chain
\(\mathrm{TP}\rightarrow P_1\rightarrow\cdots\rightarrow P_n\rightarrow\mathrm{TP}\).
If \(\delta\) decoy photons are inserted in each transmission for eavesdropping detection, then the total quantum communication cost is
\(
\mathcal{O}\bigl((n+1)(\ell M+\delta)\bigr)
=
\mathcal{O}\bigl(n(\ell M+\delta)\bigr).
\)

\emph{\textbf{Quantum Computation Cost.}}
TP applies \(\ell M\) initial blinding rotations, \(\ell M\) mask-removal rotations, and performs \(\ell M\) final measurements on the signal photons. 
Hence, TP's quantum computation cost is \(\mathcal{O}(\ell M+\delta)\). 
Each participant \(P_i\) performs \(\ell M\) local rotation operations and \(\mathcal{O}(\delta)\) decoy-photon measurements, giving a per-participant cost of \(
\mathcal{O}(\ell M+\delta).\)
Therefore, the total quantum computation cost of all participants is \( \mathcal{O}\bigl(n(\ell M+\delta)\bigr).\) Notice that the anchor sets are chosen with size independent of the asymptotic domain scale, we have \(M=\mathcal{O}(q)\).

\emph{\textbf{Performance Comparison.}}
Table~\ref{comparison} compares our protocol with representative classical and quantum TPSI schemes. 
Unlike the classical constructions in \cite{BMRR21,Zhang21,Hu24}, our protocol is quantum-cryptography based and therefore fits the long-term security motivation of privacy-preserving computation in the quantum setting. 
At the same time, our construction relies only on single-photon rotations in the quantum phase and a lightweight OLE-based classical post-processing step. 
This differs from the QHE-based design of \cite{Wang26}, whose encrypted quantum evaluation introduces Clifford gates. 

More importantly, our protocol explicitly realizes cardinality testing: it outputs the predicate
\(
\mathbf 1\!\left[\left|\bigcap_{i=1}^{n}X_i\right|\ge \tau\right],
\)
rather than first exposing the exact intersection cardinality \(\left|\bigcap_{i=1}^{n}X_i\right|\) and then comparing it with \(\tau\). 
This distinction is essential for TPSI specially under the TP setting and makes our functionality consistent with the classical TPSI paradigm represented by \cite{BMRR21,Zhang21,Hu24}. 
By contrast, in the existing quantum TPSI protocols \cite{Mohanty24,LI26,Wang26}, TP remains responsible not only for measurement, but also for interpreting the measurement outcomes and determining the intersection-related result. 
Our protocol separates these roles: TP is reduced to a blinded measurement party, while the threshold decision is delegated to the classical subprotocol \hyperref[subprotocol]{\(\Pi_{\mathrm{CT}}^{\tau}\)}.

\begin{table*}[t]
  \centering
  \scriptsize
  \caption{Comparison of TPSI protocols.}
  \label{comparison}

  \setlength{\tabcolsep}{3pt}
  \renewcommand{\arraystretch}{0.95}

  \resizebox{\textwidth}{!}{
  \begin{tabular}{cccccccc}
    \toprule
    Protocol
      & \begin{tabular}[c]{@{}c@{}}Party\\Scenarios\end{tabular}
      & \begin{tabular}[c]{@{}c@{}}Quantum\\Cryptography Based\end{tabular}
      & Technologies
      & \begin{tabular}[c]{@{}c@{}}TP\\Model\end{tabular}
      & \begin{tabular}[c]{@{}c@{}}Cardinality\\Testing\end{tabular}
      & \begin{tabular}[c]{@{}c@{}}Communication\\Complexity\end{tabular}
      & \begin{tabular}[c]{@{}c@{}}Computation\\Complexity\end{tabular} \\
    \midrule
      \cite{BMRR21}&  Multi-Party & No & TFHE & None & Yes & $\mathcal{O}\bigl(n\tau\bigl)$& $\mathcal{O}\bigl(n-\tau\bigl)^4$\\
      \cite{Zhang21}&Two-Party & No & \begin{tabular}[c]{@{}c@{}}GBF\\OT Extension\end{tabular} & None &Yes& $\mathcal{O}\bigl(\lambda B\bigr)$&$\mathcal{O}\bigl(n\log n\bigr)$\\
      \cite{Hu24}& Two-Party & No & FHE & None & Yes & $\mathcal{O}\bigl(q\sqrt{S})$ & $\mathcal{O}\bigl(qSNd_{ks}B)$\\
      \cite{Mohanty24}& Two-Party & Yes & Rotation & Trusted & No & $\mathcal{O}\bigl(q+(\ell+\delta)q\bigr)$ & $\mathcal{O}\bigl(q+(\ell+\delta)q\bigr)$\\
      \cite{LI26}& Multi-Party& Yes & Rotation & Semi-honest & No & $\mathcal{O}\bigl(nQ+nq(\ell+\delta)\bigr)$ &$\mathcal{O}\bigl(nQ+nq(\ell+\delta)\bigr)$\\
      \cite{Wang26} & Multi-Party & Yes & QHE & Trusted & No & $\mathcal{O}\bigl(n(q+\delta)\bigr)$ & $\mathcal{O}\bigl(q(n+\delta)\bigr)$ \\
      Ours& Multi-Party & Yes 
      & \begin{tabular}[c]{@{}c@{}}Rotation\\OLE\end{tabular} 
      & Semi-honest & Yes &$\mathcal{O}\bigl(n(\ell q+\delta)\bigr)$ & $\mathcal{O}\bigl(n(\ell q+\delta)\bigr)$ \\
    \bottomrule
  \end{tabular}
  }

  \vspace{0.4em}
  \begin{minipage}{0.98\textwidth}
    \footnotesize
    \emph{Note:}
    \(n\) : number of participating parties;
    \(q\) : data size of each participant;
    \(\ell\) : number of repeated sequences;
    \(\delta\) : number of decoy positions;
    \(Q\) : number of initial qubits prepared by SA-OQKD;
    \(B\) : number of hash functions in the Bloom filter;
    \(N\) : lattice dimension of LWE;
    \(d_{ks}\) denotes a small constant associated with the KeySwitch technique;
    \(S\) : size of the server-side dataset;
    and \(\lambda\) : security parameter.
  \end{minipage}
\end{table*}
\section{Conclusion}\label{sectionConclusion}
\noindent\emph{Conclusion.} This paper presented a rotation based MP-QTPSI protocol with explicit cardinality testing. 
Unlike existing quantum TPSI constructions that rely on TP to recover the intersection cardinality before threshold comparison, our protocol decouples TP's measurement role from the threshold-testing functionality. 
By combining hidden-label quantum measurements, participant-side correlated rotations, and a classical OLE-assisted testing procedure, the proposed protocol realizes
\(
\mathbf 1\!\left[\left|\bigcap_{i=1}^{n}X_i\right|\ge \tau\right]
\)
without exposing the exact intersection cardinality. 

An additional advantage of the hidden-label testing framework is its flexibility. 
Since the final decision is reduced to a classical predicate over secret-shared consistency statistics, the present threshold rule can be naturally extended to richer policies, such as weighted threshold, ratio-based threshold \cite{BMRR21}, or over-threshold \cite{Arpaci25} without changing the core quantum interaction structure.

\emph{Future Work.} (1) it is meaningful to extend the current semi-honest TP model to stronger adversarial settings, including malicious TP behavior and possible collusion between TP and participants; 
(2) it is worth investigating whether the quantum communication and computation costs can be made threshold-dependent, analogous to classical TPSI designs such as~\cite{GhoshSimkin19}, rather than scaling with the full encoded domain length;
(3) although current quantum PSI-type protocols do not require participants to hold sets of equal cardinality, they typically encode inputs into equal-length indicator vectors, which still reflects a balanced-PSI style representation. 
Developing genuinely unbalanced quantum PSI may further reduce quantum resources and improve practicality in highly skewed data regimes.

\bibliographystyle{myalpha-hyper-tail} 
\bibliography{myrefs}       
\end{document}